\documentclass[a4paper, 12pt, abstract=true, numbers=noenddot]{scrartcl}

\usepackage{enumitem}		
\usepackage{amssymb}		
\usepackage{amsmath}		
\usepackage{amsthm}			

\usepackage[utf8]{inputenc}
\usepackage{lmodern}
\usepackage[T1]{fontenc}
\usepackage[english]{babel}

\usepackage{dsfont} 

\usepackage{graphicx}
\usepackage{caption}
\usepackage{booktabs}

\usepackage{hyperref}

\usepackage{authblk}

\usepackage{comment}
\usepackage{makecell}

\usepackage[title]{appendix}

\title{Inference in Tightly Identified and Large-Scale Sign-Restricted SVARs}
\author{Markku Lanne}
\author{Jani Luoto}
\author{Adam Rybarczyk\thanks{Corresponding author: adam.rybarczyk@helsinki.fi \\
\noindent We thank Jonas Arias, Joshua Chan, and Juan F. Rubio-Ramírez for helpful comments. Financial support from the Research Council of Finland (grant 347986) is gratefully acknowledged}}

\affil{University of Helsinki}

\usepackage[backend=biber, style=apa, citestyle=apa, sortcites=false, maxcitenames=2]
{biblatex}
\let\temp\epsilon
\let\epsilon\varepsilon
\let\varepsilon\temp
\renewcommand{\vec}{\operatorname{vec}}

\newcommand{\tr}{\operatorname{tr}}
\newcommand{\R}{\mathbb{R}}
\newcommand{\logit}{\operatorname{logit}}
\newcommand{\Rhat}{\hat{R}}

\newcommand{\tA}{\tilde{A}}

\newtheorem{proposition}{Proposition}

\newcommand{\dif}[1]{\operatorname{d}\!#1}

\begin{document}
\maketitle




\begin{abstract}
    
    \noindent We propose a new approach to inference in tightly identified and large-scale structural vector autoregressions based on a reparameterization that enables imposing identifying inequality restrictions through continuously differentiable mappings. Permitted inequality restrictions include shape and ranking restrictions as well as bounds on economically relevant elasticities, and the approach is also able to accommodate zero restrictions in a straightforward manner. We implement a Hamiltonian Monte Carlo algorithm and show how the posterior density can be rapidly evaluated under the reparameterization, thus facilitating inference in high-dimensional settings. Two empirical applications demonstrate that our approach tends to result in lower serial dependence in Markov chains, larger effective sample sizes and reduced computation time relative to existing methods.
\end{abstract}

\newpage
    
	\section{Introduction}
	Imposing sign restrictions is a central tool for identifying structural vector autoregressions (SVARs). These restrictions are typically implemented by first estimating the posterior distribution of the reduced-form parameters and then mapping posterior draws into the structural parameter space. The most commonly used algorithm \parencite[see, for example][, and the references therein]{faust1998, canova2002,uhlig2005, rubioramirez2010} is an accept-reject sampling procedure that draws independent posterior samples of the reduced-form parameters from a conjugate normal-inverse-Wishart distribution, combines them with independent draws from the uniform distribution over orthogonal matrices, and retains only those that satisfy the imposed sign restrictions. 
    As shown by \textcite{arias2018}, this algorithm generates independent draws of the structural parameters from a well-defined posterior distribution induced by the reduced-form prior under the imposed sign restrictions. It has two appealing properties: in addition to producing independent and identically distributed posterior draws, it can readily accommodate various additional identifying restrictions, such as ranking and elasticity bounds restrictions.

    However, the performance of the accept-reject algorithm crucially depends on the size of the admissible set of orthogonal matrices satisfying the restrictions. As the restrictions become tighter, the admissible set shrinks and the probability of accepting a draw declines sharply, rendering the algorithm increasingly inefficient and, in extreme cases, infeasible. This concern is particularly relevant because  strong identifying restrictions that substantially reduce the size of the admissible set are increasingly being imposed in empirical applications \parencite[see, e.g.,][, and the references therein]{inoue2026}. In the same vein, accept–reject algorithms face challenges in large-scale SVAR models, where the parameter space is high-dimensional and identification often requires imposing a large number of sign restrictions, frequently combined with additional identifying constraints. For example, \textcite{chan2025} impose more than one hundred sign restrictions to identify eight shocks in a 35-variable SVAR. Such designs naturally entail a substantial computational burden and highlight the need for inference methods that extend beyond standard accept-reject procedures.

    Instead of relying on accept-reject algorithms, we propose imposing inequality restrictions through a reparameterization of the structural model by continuously differentiable mappings. The restrictions are thus directly incorporated into the model, which eliminates the need to discard posterior draws and substantially improves computational efficiency, especially when the admissible set is small. Our approach accommodates a broad range of inequality restrictions commonly used in the SVAR literature. These restrictions, applicable in both small and large models, include shape and ranking restrictions on impulse responses at multiple horizons, as well as bounds on the magnitude of economically relevant elasticities. The framework also allows imposing zero restrictions without requiring importance weighting or other corrections typically needed under accept-reject sampling.
    
    The flexibility of the proposed reparameterization comes at a cost, as the induced posterior distribution of the parameters does not belong to a standard family, which precludes direct sampling. Therefore, posterior inference must rely on Markov chain Monte Carlo (MCMC) methods, and we recommend the No-U-Turn Sampler (NUTS) of \textcite{hoffman2014}, a modern variant of Hamiltonian Monte Carlo (HMC) that has only rarely been employed in the SVAR literature. It exploits gradients of the log-posterior and the smooth posterior geometry induced by the reparameterization to explore the posterior effectively. However, gradient-based sampling is computationally demanding, as each iteration requires repeated evaluations of the log-posterior and its gradient. Ensuring fast computation of the posterior density is therefore crucial, and  we show how it can be evaluated efficiently under the proposed reparameterization to facilitate inference even in high-dimensional parameter spaces. For numerical work, in this paper, we use the software package Stan \parencite{stanreference}, containing an implementation of the NUTS with several desirable features, including adaptive tuning, to facilitate estimation.

    Recently, \textcite{kitagawa2025} have argued that standard MCMC algorithms do not perform well in Bayesian inference on set-identified models because set identification induces flat posterior regions along observationally equivalent parameter manifolds, which can severely impair mixing in these algorithms. However, while they show that random-walk Metropolis performs poorly in this setting, with mixing deteriorating sharply as the size of the identified set increases, those results do not necessarily carry over to modern, gradient-based samplers with adaptive tuning. Indeed, using the simulation design considered in \textcite{kitagawa2025}, we show that the No-U-Turn Sampler as implemented in Stan is able to effectively explore high-dimensional, set-identified posteriors with flat directions, producing well-mixed chains even when the identified set is large.\footnote{The simulation exercise does not actually involve an SVAR model, but in tightly identified SVARs the identification problem is less severe, which suggests that Hamiltonian Monte Carlo is particularly well suited for posterior exploration in set-identified SVAR models, and this conclusion is also supported by our empirical results.} 
    
    We are not the first to depart from standard accept–reject algorithms in sign-identified SVAR inference, as already \textcite{baumeister2015, baumeister2019} propose to use an MCMC sampler to draw the structural parameters directly. Their approach relies on conjugate prior structures, which produce closed-form conditional posteriors for all parameters except those of the contemporaneous structural relations. The latter are first drawn from their marginal posterior, after which inference for the remaining parameters proceeds using their conditional posteriors. This simplifies computation, but comes at the cost of reduced prior flexibility. Moreover, without substantial computational refinements, their algorithm can become computationally demanding in large VAR systems. More recently, \textcite{hou2024} derives the conditional posteriors for the structural parameters under linear equality and inequality restrictions that can be used to build a Gibbs sampler for an SVAR model. His approach has the limitation that for longer horizons of the IRFs, an accept-reject step is needed, which is not the case with our approach.

    \textcite{bruns2023} develop an importance sampler for sign-restricted Bayesian SVARs that uses proposal draws based on a conjugate reduced-form posterior. Their approach improves sampling efficiency and enables estimation beyond the standard conjugate prior setup. However, importance sampling methods can perform poorly in high-dimensional settings, and, as the authors note, the approach is not intended for large VAR systems. \textcite{read2025} employ slice sampling for posterior simulation and replace binary sign restrictions with a continuous relaxation. Also their algorithm involves an importance sampling step, which is used to obtain draws from the correct target distribution. Furthermore, although their method yields independent posterior draws, it is formulated under a conditionally uniform prior in the way described in \textcite{uhlig2017}, which is not invariant to the set of imposed restrictions, as shown by \textcite{arias2025elliptical}. Instead, our approach uses a uniform prior over orthogonal matrices and therefore guarantees invariance to the restrictions (see \textcite{arias2025uniform}).

    The inference methods recently put forward by \textcite{chan2025} and \textcite{arias2025elliptical} are particularly useful in tightly identified and large-scale SVARs. \textcite{chan2025} exploits the fact that a uniform distribution over the orthogonal matrices is preserved under reordering the columns or flipping their signs and thus yields multiple parameter draws at a lower computational cost in an accept-reject algorithm. \textcite{arias2025elliptical}, in turn, propose embedding elliptical slice sampling within a Gibbs sampler, which allows draws of the orthogonal matrix to remain within the identified set by construction. Both approaches deliver substantial computational gains in tightly identified and large-scale SVARs, while maintaining the conventional reduced-form prior structure. However, because of reliance on an accept–reject framework, the approach of \textcite{chan2025} is not able to achieve the same level of performance as that of \textcite{arias2025elliptical}.

    As the method of \textcite{arias2025elliptical} represents  the current state of the art in inference in tightly identified and large-scale SVAR models, we focus on benchmarking our approach against theirs in Section \ref{sec:empirical}. In particular, we revisit their empirical applications to an oil market SVAR identified by sign and elasticity restrictions, as well as a 35-variable SVAR model in which ten shocks are identified by sign and ranking restrictions. In both settings, our HMC approach exhibits faster convergence, as measured by the $\Rhat$ statistic of \textcite{vehtari2021}. While differences are small in the oil market SVAR, with our method only doubling the speed of convergence, they are substantially greater in the 35-variable model. In that setting, our approach  converges to the stationary distribution approximately five times faster than the method of \textcite{arias2025elliptical}. Measured by the effective sample size (ESS), the number of independent draws equivalent to the correlated MCMC samples, it also attains substantially higher sampling efficiency, with ESS up to an order of magnitude greater than those of the elliptical slice sampler. These results are likely to follow from the fact that the Gibbs sampler of \textcite{arias2025elliptical} generates Markov chains with extremely persistent serial correlation, which remains sizeable even at lags in the hundreds, whereas autocorrelations in our sampler decay comparatively quickly.
    
    While our sampler performs better in the two applications, the approach of \textcite{arias2025elliptical} offers complementary advantages. In particular, in contrast to their sampler, our approach does not allow for arbitrary combinations of inequality restrictions; for example, in general, long-run and dynamic sign restrictions cannot both be imposed simultaneously. On the other hand, as shown in Section \ref{sec:35variable}, we are able to impose exclusion restrictions (even at no additional computational cost) which their approach does not afford. Our method also supports highly flexible prior specifications, which need not yield closed form conditional posteriors, although this may come at a computational cost.    

    Finally, it is worth noting that our paper is also related to \textcite{plagborgmoeller2019}, who estimates vector moving average (VMA) models using the NUTS to sample directly from a moving-average representation. In our SVAR setting, we adopt a mixed parameterization: the moving-average coefficient matrices subject to the identifying restrictions are sampled directly, whereas the remaining dynamics are parameterized in terms of reduced-form VAR coefficients. In our experience, this parametrization is well suited to the NUTS, with chains mixing well and converging rapidly even under diffuse priors. Furthermore, this parameterization allows for the usage of a variety of priors designed for VAR models. By contrast, posterior simulation in unrestricted VMA models is computationally challenging, and satisfactory mixing and convergence may be difficult to achieve without strongly informative priors, which can be restrictive in empirical applications.

    The remainder of the paper is organized as follows. In Section \ref{sec:model}, we present the SVAR model. Section \ref{sec:id_restrictions} introduces the identification problem and the restrictions that can be implemented in our framework. In Section \ref{sec:bayesian_inference}, we describe the prior and posterior, our method of imposing identifying restrictions, and an efficient method of computing the posterior density, which is crucial for the implementation of our approach in tightly identified and large-scale SVAR models. Section \ref{sec:empirical} illustrates the performance of our approach in empirical applications. Finally, Section \ref{sec:conclusion} concludes.

	\section{Model}\label{sec:model}
	We consider the SVAR model
	\begin{align}
		A_0 y_t &= c + A_{1} y_{t-1} + \dots + A_{p} y_{t-p} + \epsilon_{t}, \label{eq:svar_a}
	\end{align}
	where $y_t$ is an $N$-vector of observables, $c$ is an $N$-vector of constants, $A_i$, $i=1,\dots p$, are $N\times N$ autoregressive coefficient matrices, and the $N\times N$ matrix $A_0$ summarizes the contemporaneous structural relations among the elements of $y_t$. The $N$-vector of structural shocks $\epsilon_{t}$ has  mean zero and identity covariance matrix $I_N$, and its elements are independent across time. We assume that $\epsilon_{t}$ is jointly Gaussian, but this assumption can be relaxed, for example, if we wish to make use of non-Gaussianity in identification or to place restrictions on higher moments.

	The SVAR model \eqref{eq:svar_a} is in the so-called A-form (see \cite{Lutkepohl2005}, Chapter 9), but it can equivalently be written in the B-form
	\begin{align}
		y_t &= \tilde{c} + \tilde{A}_{1} y_{t-1} + \dots + \tilde{A}_{p} y_{t-p} + B \epsilon_{t}, \label{eq:svar_b}
	\end{align}
	where $\tilde{c} = A_0^{-1}c$, $\tilde{A}_i = A_0^{-1}A_i$, and $B = A_0^{-1}$. This representation is useful for estimation because the matrix $B$ gives us the impact responses directly, whereas $A_0$ needs to be inverted to obtain them.
	
	The usual objects of interest in SVAR analysis are the structural impulse response functions (IRF), which trace out the responses of observables at different time horizons $t+i$, $i=0,1,2,\dots$, for a shock occurring at time $t$. The structural IRFs are obtained recursively as $\Psi_0 = A_0^{-1} = B$, and
	\begin{align}
		\Psi_i &= \sum_{j=1}^{\min\{i,p\}} \tilde{A}_j\Psi_{i-j}. \label{eq:irf_recursion}
	\end{align}
	The $(l,k)$ element of $\Psi_i$, $\Psi_{i,lk}$, is the response of variable $l$ to shock $k$ at horizon $i$.

	\section{Identifying Restrictions} \label{sec:id_restrictions}

    This section describes the types of identifying restrictions that can be accommodated within our framework. In Section \ref{sec:enforcing_restrictions}, we illustrate how these restrictions can be implemented in practice.
    
    In the reduced-form vector autoregression (VAR) corresponding to model \eqref{eq:svar_a},
	\begin{align}
		y_t &= \tilde{c} + \tilde{A}_{1} y_{t-1} + \dots + \tilde{A}_{p} y_{t-p} + u_{t},
	\end{align}
    $u_t =  A_0^{-1} \epsilon_{t}$ are the reduced-form errors with mean zero and covariance matrix $\Sigma = A_0^{-1}(A_0^{-1})'$. The structural model suffers from the well-known identification problem that the covariance matrix of the reduced-form residuals is identified, but any $\bar{A}_0$ satisfying $\bar{A}_0^{-1} =  A_0^{-1}Q$, for an orthogonal matrix $Q$, yields the same $\Sigma$, that is, the same covariance structure of $u_t$. Therefore, the corresponding models are observationally equivalent.
    
	To identify the structural parameters of an SVAR model, the researcher must use prior information, typically in the form of identifying restrictions. Currently, sign restrictions are probably the most widely used type of such restrictions. In the implementation of \textcite{uhlig2005, rubioramirez2010}, candidate draws of the impact matrix are generated from the product $PQ$, where $P$ is the Cholesky factor of the posterior covariance matrix $\Sigma$ of $u_t$ and $Q$ is an orthogonal rotation matrix drawn from its prior distribution, and only draws satisfying the sign restrictions are retained. However, it is well known that sign restrictions do not point-identify the model, and hence, the prior is not updated within the identified set in Bayesian estimation  \parencite[see, e.g.,][]{baumeister2015}. This problem can be avoided by sufficiently tight identification of the orthogonal rotation matrix $Q$, because in that case posterior uncertainty is mostly attributed to uncertainty about the reduced-form parameters \parencite{inoue2020}.
	
	However, tight identification of $Q$ poses computational challenges, as the probability of randomly drawing a candidate $Q$ satisfying all restrictions imposed on the model is inversely proportional to the size of the identified set. Hence, estimation time can be prohibitively long when the restrictions are tight and/or there are a lot of them. To that end,  several solutions have recently been proposed to increase the efficiency of sampling the orthogonal matrix $Q$. First, \textcite{chan2025} exploit the fact that permutations of columns and flips of their signs preserve a uniform distribution over $Q$, which facilitates obtaining multiple candidate draws from a single $Q$. Second, \textcite{arias2025elliptical} use elliptical slice sampling, which is a particular Markov Chain Monte Carlo (MCMC) algorithm, to sample $Q$ such that the draws remain within the identified set. Finally, \textcite{read2025} approximate sign restrictions by a continuous function, which allows more efficient sampling using the slice sampler. However, their procedure requires an additional importance sampling step to ensure that the draws follow the correct distribution, and it is formulated under a conditionally uniform prior, which is not invariant to the set of imposed restrictions, as discussed in the Introduction.
    
	Instead of sampling $Q$, the method proposed in this paper works with the impact matrix $B$ directly, which facilitates imposing  sign restrictions on the parameters of interest using continuously differentiable mappings to transform them into an unconstrained parameter space. Thus, there is no need to perform any rejection sampling. This approach works well even in very tightly identified models, as regardless of the size of the identified set, the sampler always runs on an unrestricted parameter space. Our method allows for sign restrictions on the impulse responses also beyond the impact effect, as long as the horizon to be restricted  does not exceed the VAR lag length $p$. We believe that this limitation on the horizon is unlikely to be important in most empirical applications, especially in light of the recent work by \textcite{olea2025} recommending relatively long lag lengths.

    Besides sign restrictions, our approach facilitates imposing other kinds of identifying restrictions. In contrast to \textcite{arias2018}, exclusion restrictions can be imposed without an importance sampling step, thus avoiding computation of volume elements. Also, the effect of a given shock can be constrained to increase over a specified horizon, or a shock can be required to have the largest impact effect on a given variable. The exact method of placing the latter kinds of restrictions will be presented in Section \ref{sec:enforcing_restrictions}. Finally, sign restrictions can be combined with other kinds of restrictions, including restrictions on elasticities, as in \textcite{kilian2012}, who impose them to avoid implausibly large estimates of the short-run elasticity of oil supply under sign identification alone. More generally, we can impose ranking restrictions requiring a shock to have a greater or smaller effect on a given variable than on another variable (see \cite{amirahmadi2021}). The use of these kinds of restrictions is illustrated in Sections \ref{sec:oilmarket} and \ref{sec:35variable}, respectively.

    Our approach is flexible, but there are certain kinds of restrictions and combinations of restrictions that it does not permit, and in these cases, the samplers of \textcite{arias2025elliptical} or \textcite{chan2025} are a viable choice. In particular, because the sampler operates on IRF parameters, restrictions can only be stated in terms of the IRFs; constraints that jointly involve IRF values and VAR matrices are therefore not feasible. For example, long-run restrictions generally cannot be combined with dynamic sign restrictions on the IRFs. More broadly, only restrictions that admit a diffeomorphism\footnote{A diffeomorphism is a continuously differentiable bijection with a continuously differentiable inverse. This permits a wide range of restrictions, but, for example, restrictions on forecast error variance decompositions (FEVD) are not compatible because the squaring operations required for their calculation are not bijective.} to an unconstrained parameter vector are allowed. It is also not possible to impose restrictions on the likelihood, such as the narrative restrictions of \textcite{antolindiaz2018}.
	
	\section{Bayesian Inference} \label{sec:bayesian_inference}
	
	\subsection{Prior and Posterior}

    Following standard practice in the Bayesian VAR literature, we impose a conjugate normal–inverse Wishart (N-IW) prior on the reduced-form parameters $\tilde{A}=\left[\tilde{c}, \tilde{A}_1,\dots,\tilde{A}_p\right]'$, and the covariance matrix $\Sigma$. This choice yields a posterior distribution of the same form and improves computational efficiency. The proposed method, however, does not rely on conjugacy and remains valid under more general prior specifications. 
    
    Specifically, we assume $\Sigma \sim IW(\nu_0, S_0)$ and $\vec(\tilde{A}) \mid \Sigma \sim N(\vec(\Phi_0), \Sigma \otimes \Omega_0)$, where $IW(\cdot,\cdot)$ and $N(\cdot,\cdot)$ denote the inverse Wishart and multivariate normal distributions, respectively, and $\vec$ is matrix vectorization. The quantities $\nu_0, S_0, \Phi_0$ and $\Omega_0$ are prior parameters. Under these assumptions, the posterior distribution is again normal–inverse Wishart, with updated parameters given by
	
    \begin{equation}
        \begin{gathered}
            \tilde{\nu} = \nu_0 + T \\
            \tilde{\Omega} = (X'X + \Omega_0^{-1})^{-1} \\
            \tilde{\Phi} = \tilde{\Omega}(X'Y + \Omega_0^{-1}\Phi_0) \\
            \tilde{S} = Y'Y + S_0 + \Phi_0'\Omega_0^{-1}\Phi_0 - \tilde{\Phi}' \tilde{\Omega}^{-1} \tilde{\Phi}
        \end{gathered}
            \label{eq:posterior_parameters}
    \end{equation}
	
	The rotation matrix $Q$ is assigned a uniform prior over the space of orthogonal matrices, independent of the reduced-form parameters. Because the likelihood is invariant to orthogonal rotations, the posterior distribution of $Q$ coincides with the prior.
	
    Our aim is to sample from the structural parameters of the model, which requires a transformation of the posterior density of the orthogonal reduced-form parameters into the implied density over the structural parameters of interest. Related transformations from the orthogonal reduced-form parameters to the structural parameters have been derived in previous work. For instance, \textcite{arias2018} characterize the induced distribution of the structural parameters in the A-model, referring to it as the normal–generalized normal distribution. \textcite{inoue2013} derive the distribution of the first $p+1$ structural impulse response matrices, and \textcite{inoue2022} provide a further generalization by allowing the number of IRF matrices to differ from $p+1$.
    
    While we build on \textcite{inoue2013, inoue2022}, we adapt their results to our setting in which the posterior is parameterized in terms of the first $k$ impulse response matrices $\Psi_0,\Psi_1,\dots,\Psi_k$ (on which the identifying restrictions are imposed) and the remaining VAR coefficient matrices $(\tilde{A}_{k+1},\dots,\tilde{A}_p)$. As our interest is in estimating a full SVAR model, we keep the VAR matrices in the density, as opposed to \textcite{inoue2022} who only consider the IRF parameters. This mixed parameterization also improves the efficiency of posterior simulation because the VAR coefficients tend to exhibit weaker posterior dependence than impulse response parameters. In addition, by exploiting the properties of the $LQ$ decomposition, we obtain a simpler expression for the associated Jacobian determinant than \textcite{inoue2013}. It does not include the orthogonal matrix $Q$, which also saves on calculations as there is no need to evaluate a determinant of another large matrix.

    The following proposition characterizes the posterior of the structural parameters collected in $\Pi_k=(\tilde{c},B,\Psi_1,\dots,\Psi_k,\tA_{k+1},\dots,\tA_p)$ in terms of the N-IW posterior of the orthogonal reduced-form parameters.
    
    \begin{proposition}
        \label{prop:structural_density}
        Let $NIW_{(\nu, \Omega, \Phi, S)}(\Sigma, \tilde{A})$ denote the normal-inverse Wishart density with parameters $\nu, \Omega, \Phi, S$, s.t.
        \begin{align}
            NIW_{(\nu, \Omega, \Phi, S)}(\Sigma, \tilde{A}) \propto |\Sigma|^{-\frac{\nu+N+1}{2}} e^{-\frac{1}{2}\tr(S\Sigma^{-1})} |\Sigma|^{-\frac{Np+1}{2}} e^{ -\frac{1}{2} \vec(\tilde{A}-\Phi)' (\Sigma\otimes\Omega)^{-1} \vec(\tilde{A}-\Phi) }.
        \end{align}
        Then the posterior of structural parameters $\Pi_k$ satisfies
        
        \begin{align}
        	p(\Pi_k\mid Y) \propto NIW_{(\tilde{\nu}, \tilde{\Omega}, \tilde{\Phi}, \tilde{S})}(\Sigma(B), \tilde{A}(\Pi_k)) \cdot |B|^{-kN+1}, \label{eq:unrestricted_posterior}
        \end{align}
        where $\tilde{\nu}, \tilde{\Omega}, \tilde{\Phi}, \tilde{S}$ are given by equation \eqref{eq:posterior_parameters}.
    \end{proposition}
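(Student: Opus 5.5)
The plan is a change of variables from the orthogonal reduced-form parameters $(\Sigma,\tilde{A},Q)$ to $\Pi_k$. The posterior of $(\Sigma,\tilde{A},Q)$ is the product of the $NIW_{(\tilde{\nu},\tilde{\Omega},\tilde{\Phi},\tilde{S})}(\Sigma,\tilde{A})$ density and the uniform (Haar) density on orthogonal matrices. This factorisation holds because the likelihood is invariant to rotations and the prior on $Q$ is uniform and independent of $(\Sigma,\tilde{A})$. The map to $\Pi_k$ is built in two stages:
\begin{align*}
(\Sigma,Q,\tilde{c},\tilde{A}_1,\dots,\tilde{A}_p)\ &\mapsto\ (B,\tilde{c},\tilde{A}_1,\dots,\tilde{A}_p), && B=LQ,\\
(B,\tilde{c},\tilde{A}_1,\dots,\tilde{A}_p)\ &\mapsto\ \Pi_k, && \text{via the recursion \eqref{eq:irf_recursion}}.
\end{align*}
Here $L$ is the lower-triangular Cholesky factor of $\Sigma$ with positive diagonal. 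By uniqueness of the $LQ$ decomposition, the first map is a bijection up to a null set. Its inverse is $\Sigma(B)=BB'$, and $\tilde{A}(\Pi_k)$ is recovered by inverting the recursion. The target density is then the $NIW$ density evaluated at $(\Sigma(B),\tilde{A}(\Pi_k))$ times the absolute Jacobian of the inverse map. The claim reduces to showing that this Jacobian is proportional to $|B|^{-kN+1}$.

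\textbf{Stage 1 ($B$ block).} I will use two standard exterior-product identities and never differentiate through $Q$.
\begin{enumerate}
\item For $B=LQ$ with $L$ lower triangular and $Q$ orthogonal, $(\dif B)=\prod_{i} l_{ii}^{\,N-i}\,(\dif L)\,(\dif Q)$, where $(\dif Q)$ is the Haar measure.
\item For $\Sigma=LL'$, $(\dif \Sigma)=2^N\prod_i l_{ii}^{\,N-i+1}\,(\dif L)$.
\end{enumerate}
Eliminating $(\dif L)$ gives $(\dif\Sigma)(\dif Q)=2^N\prod_i l_{ii}\,(\dif B)=2^N|B|\,(\dif B)$. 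So the joint density of $B$ induced by $p(\Sigma)$ and the uniform $Q$ equals $p(\Sigma(B))$ times a factor proportional to $|B|$. This is the ``$+1$'' in the exponent. It uses the $LQ$ structure to avoid any determinant involving $Q$, as advertised before the proposition.

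\textbf{Stage 2 ($\Psi$ block).} Hold $B$, $\tilde{c}$ and $\tilde{A}_{k+1},\dots,\tilde{A}_p$ fixed, and assume $k\le p$. By \eqref{eq:irf_recursion},
\[
\Psi_i=\tilde{A}_iB+\sum_{j=1}^{i-1}\tilde{A}_j\Psi_{i-j},
\]
so $\Psi_i$ depends on $\tilde{A}_i$ only through the term $\tilde{A}_iB$, plus terms in $\tilde{A}_1,\dots,\tilde{A}_{i-1}$. Ordering the variables as $(\tilde{A}_1,\dots,\tilde{A}_k)$, the Jacobian matrix of $(\tilde{A}_1,\dots,\tilde{A}_k)\mapsto(\Psi_1,\dots,\Psi_k)$ is therefore block triangular. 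Each diagonal block is the linear map $\tilde{A}_i\mapsto\tilde{A}_iB$, i.e.\ $B'\otimes I_N$ acting on $\vec(\tilde{A}_i)$, whose determinant has absolute value $|B|^N$. Hence $(\dif\Psi_1)\cdots(\dif\Psi_k)=|B|^{kN}(\dif\tilde{A}_1)\cdots(\dif\tilde{A}_k)$. The density in the new variables picks up the inverse factor $|B|^{-kN}$. The remaining coordinates $\tilde{c}$ and $\tilde{A}_{k+1},\dots,\tilde{A}_p$ map by the identity.

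Combining the two stages gives
\[
p(\Pi_k\mid Y)\propto NIW_{(\tilde{\nu},\tilde{\Omega},\tilde{\Phi},\tilde{S})}\bigl(\Sigma(B),\tilde{A}(\Pi_k)\bigr)\,|B|^{1-kN},
\]
which is \eqref{eq:unrestricted_posterior}; the constant $2^N$ and the Haar normalisation are absorbed into the proportionality.

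\textbf{Main obstacle.} I expect the delicate step to be Stage 1. Getting the exponent of each $l_{ii}$ right in the $LQ$ and Cholesky Jacobians decides whether the power of $|B|$ comes out as $+1$. The sign normalisation (positive diagonal of $L$) must also be handled so that the map is genuinely one-to-one almost everywhere, with $|B|$ understood as $|\det B|$. I would first verify the $N=1$ case, where $\Sigma=b^2$, $Q=\pm1$ and $\dif\Sigma=2|b|\,\dif b$, and then prove the general identities by the usual wedge-product argument, as in Muirhead's treatment of the QR and Cholesky decompositions.
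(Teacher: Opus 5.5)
Your proposal is correct and follows essentially the same argument as the paper. It uses the same two-step change of variables: a block-triangular Jacobian contributing $|B|^{-kN}$ for the map between $(\Psi_1,\dots,\Psi_k)$ and $(\tilde{A}_1,\dots,\tilde{A}_k)$, and the Cholesky Jacobian combined with Edelman's $LQ$ Jacobian, giving $2^N\prod_i l_{ii}=2^N|B|$ for $B\mapsto(\Sigma,Q)$. The only cosmetic difference is that you compute the forward diagonal blocks $B'\otimes I_N$ and invert, whereas the paper differentiates the inverse recursion directly to get $(\Psi_0')^{-1}\otimes I_N$.
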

    
    \begin{proof}
    See Appendix \ref{app:structural_density}.
    \end{proof}
    The proof of Proposition \ref{prop:structural_density} does not rely on the distribution over the reduced-form parameters being N-IW, and thus any other prior or error distribution can be specified as well. For example, the asymmetric conjugate prior of \textcite{chan2022} could be used with an appropriate Jacobian adjustment. However, we state this proposition using the N-IW density for compatibility with much of the existing literature, and because it allows fast estimation of large SVARs, as will be demonstrated in Section \ref{sec:35variable}
    
    Our goal is to incorporate both inequality and zero restrictions, and thus we also need a result that gives us the density function under these restrictions. The following proposition is a direct result of standard facts, but we state it here for clarity.

    \begin{proposition}
    \label{prop:conditional_density}
        Suppose the random vector $\theta$ has a continuous density over $\R^{n_1 + n_2 + n_3}$, and partition it as $\theta = (\theta_1, \theta_2, \theta_3)$, with the dimensions of $\theta_i$ being $n_i$. Let $R\subset \R^{n_1}$ be a set of positive probability. Suppose that $c\in\R^{n_2}$ is a vector of constants, and $\int_{R} \int_{\R^{n_3}} f(\theta_1, c, \theta_3)\dif{\theta_3}\dif{\theta_1} > 0$.
        Then the conditional density of $\theta$ satisfies
        \begin{align}
            f(\theta_1, \theta_2, \theta_3 \mid \theta_1 \in R, \theta_2 = c) &\propto \mathds{1}_{\{\theta_1 \in R\}} f(\theta_1, c, \theta_3)
        \end{align}
    \end{proposition}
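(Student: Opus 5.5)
The plan is to read the conditioning on the measure-zero event $\{\theta_2 = c\}$ in the standard way, as conditioning on the value of the random vector $\theta_2$. Conditioning on the positive-probability event $\{\theta_1\in R\}$ is then ordinary elementary conditioning. So I would proceed in two stages: first condition on $\theta_2=c$ via the conditional density, then restrict to $R$ by truncation.

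\textbf{Stage one (conditioning on $\theta_2 = c$).} Write $f_2(\theta_2)=\int_{\R^{n_1}}\int_{\R^{n_3}} f(\theta_1,\theta_2,\theta_3)\dif{\theta_3}\dif{\theta_1}$ for the marginal density of $\theta_2$. Since $R$ has positive probability and by the hypothesis, $f_2(c)\ge \int_R\int_{\R^{n_3}} f(\theta_1,c,\theta_3)\dif{\theta_3}\dif{\theta_1}>0$. The standard conditional density is then
\begin{align*}
g(\theta_1,\theta_3)=\frac{f(\theta_1,c,\theta_3)}{f_2(c)}.
\end{align*}
Continuity of $f$ is what makes this version canonical. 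It coincides with the limit of conditioning on the shrinking neighborhoods $\{\|\theta_2-c\|<\varepsilon\}$, so the evaluation at the single point $c$ is meaningful. I would verify this briefly by dominated convergence. I expect this step to be the only real subtlety. My approach is to state explicitly that conditioning on $\theta_2=c$ is defined through this continuous version (or the $\varepsilon$-limit), since otherwise the conditional density at a null set is not uniquely determined (Borel paradox).

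\textbf{Stage two (truncation to $R$).} Under $g$, the event $\{\theta_1\in R\}$ has probability
\begin{align*}
\int_R\int_{\R^{n_3}} g\,\dif{\theta_3}\dif{\theta_1}=\frac{\int_R\int_{\R^{n_3}} f(\theta_1,c,\theta_3)\dif{\theta_3}\dif{\theta_1}}{f_2(c)}>0.
\end{align*}
Elementary conditioning then gives a density proportional to $\mathds{1}_{\{\theta_1\in R\}}\,g(\theta_1,\theta_3)$. Dropping the constant $f_2(c)$ and the normalizing probability yields $\mathds{1}_{\{\theta_1\in R\}} f(\theta_1,c,\theta_3)$, viewed as a density in $(\theta_1,\theta_3)$ with $\theta_2$ fixed at $c$. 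This is the claim.

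Finally, I would note that the order of the two conditionings does not matter. Conditioning first on $\theta_1\in R$ gives the density $\mathds{1}_R f/P(R)$. Conditioning that on $\theta_2=c$ by the same continuous-version recipe gives the same kernel. This remark justifies writing the joint conditioning event in the statement.
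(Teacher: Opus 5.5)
Your argument is correct and is essentially the paper's proof. The paper also uses two elementary conditioning steps, only in the opposite order: it first truncates to $\{\theta_1\in R\}$ and divides by $P(\theta_1\in R)$, then applies the definition of the conditional density at $\theta_2=c$, which is exactly the alternative ordering in your last paragraph. One small caution: continuity of $f$ alone neither guarantees $f_2(c)<\infty$ nor supplies a dominating function for your $\varepsilon$-neighbourhood limit, so that remark is better stated as a convention defining the conditional density than derived from continuity (your ordering needs $f_2(c)<\infty$, whereas the paper's ordering only tacitly needs $\int_R\int_{\R^{n_3}} f(\theta_1,c,\theta_3)\dif{\theta_3}\dif{\theta_1}<\infty$).
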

    \begin{proof}
        See Appendix \ref{app:conditional_density}.
    \end{proof}
    In the context of our SVAR model, the density $f$ is the density of the structural parameters $\theta$ in \eqref{eq:unrestricted_posterior}. The set-identifying restrictions (like sign and ranking restrictions) and exclusion restrictions are imposed on $\theta_1$ and $\theta_2$, respectively, while $\theta_3$ collects all free parameters of the model. The key point of Proposition \ref{prop:conditional_density} is that evaluation of the conditional density under the exclusion restrictions requires no normalizing constants. Therefore, in the case of a Metropolis-style algorithm like HMC, it is sufficient to evaluate the unconstrained density on the subspace satisfying the restrictions. It is important to point out that Proposition \ref{prop:conditional_density} is more general than our method affords. In particular, we cannot restrict $\theta_1$ in a totally arbitrary way, as discussed at the end of Section \ref{sec:id_restrictions}. We also confine equality restrictions to zero restrictions to preculde the possibility of disjoint posteriors \parencite[see][]{kitagawa2025}.

	\subsection{Imposing Identifying Restrictions} \label{sec:enforcing_restrictions}
	
    In this section, we show how to impose identifying restrictions on the parameters using continuously differentiable transformations, which result in an unrestricted parameter vector. Our sampler runs on this vector, which means that rejection sampling is not needed to impose the identifying restrictions. Implementation of exclusion restrictions is straightforward in that no Jacobian adjustment is needed, whereas restrictions based on inequalities do require it as they are based on a change of variables. Because a different set of identifying restrictions is considered in each application, there are no general transformation and Jacobian that would cover all possible cases, but we discuss a few building blocks which can be used to construct frequently encountered restrictions.
	
	Exclusion restrictions can be imposed by setting the values of the parameters of interest to zero during the evaluation of the density function, as shown in Proposition \ref{prop:conditional_density}. For example, if prior economic knowledge dictates that the response of the first variable to the second shock on impact is zero, we set $B_{1,2} = 0$, where $B_{i,j}$ denotes the $(i,j)$ element of the matrix $B$, and drop the corresponding parameter from the parameter vector. Hence, the posterior density is evaluated only on the subspace where the restriction holds.
	
	Imposing inequality restrictions calls for suitable transformations of the parameters and the associated Jacobian determinant denoted by $|J|$. Combining the Jacobian with the density of the structural parameters in \eqref{eq:unrestricted_posterior}, yields the density over the unrestricted auxiliary vector $\theta$:
	\begin{align}
		p(\theta) \propto  NIW_{(\tilde{\nu}, \tilde{\Omega}, \tilde{\Phi}, \tilde{S})}(\Sigma(\theta),\tA(\theta))\cdot|B(\theta)|^{-kN+1} |J(\theta)|, \label{eq:auxiliary_density}
	\end{align}
	This is the density that our algorithm samples from. The exact cost of computing $|J|$  varies depending on the number and type of restrictions imposed on the model, but in applications involving sign restrictions or elasticity bounds it is negligible in comparison to the cost of evaluating the normal-inverse Wishart density. For example, if we restrict the sign of every parameter in $B$, evaluation of $\log|J|$ collapses to computing the sum over the $N^2$ elements of $\theta$ corresponding to $B$.
    
    Let us consider the implementation of typical inequality restrictions. Suppose first that we have a density $p_\alpha(\alpha)$ over a parameter $\alpha\in\R_+$, which could, for example, be an impact response parameter from the matrix $B$. A Metropolis-type sampler running on $\alpha$ produces proposals $\alpha < 0$, which are always rejected, and this decreases efficiency. Instead, we can reparameterize the sampler in terms of an unrestricted parameter, so that no a priori known regions of zero probability are encountered. To that end, we introduce a parameter $\tilde{\alpha}\in\R$, and set $\alpha = e^{\tilde{\alpha}}$. The Jacobian of this transformation is $e^{\tilde{\alpha}}$, and the corresponding density function is then $p(\tilde{\alpha}) = p_\alpha(e^{\tilde{\alpha}})e^{\tilde{\alpha}}$ by a change of variables. If we want to restrict $\alpha < 0$, we set $\alpha = -e^{\tilde{\alpha}}$, and the corresponding density is $p(\tilde{\alpha}) = p_\alpha(-e^{\tilde{\alpha}})e^{\tilde{\alpha}}$.
	
	Another useful transformation is the logit transformation for scalars with lower and upper bounds. Suppose we want to sample a parameter $\alpha\in(a,b)$, say, an elasticity parameter which needs to be positive ($a=0$) and also constrained from above.  To that end, we define an $\tilde{\alpha}\in\R$, and obtain $\alpha$ as
	\begin{align*}
		\alpha = a + (b-a)\logit^{-1}(\tilde{\alpha}),
	\end{align*}
	where
	\begin{align*}
		\logit^{-1}(\tilde{\alpha}) = \frac{1}{1+e^{-\tilde{\alpha}}}.
	\end{align*}
	In this case, the absolute value of the Jacobian is
	\begin{align*}
		\left| \frac{d}{d\tilde{\alpha}} \alpha(\tilde{\alpha}) \right| &= (b-a)\logit^{-1}(\tilde{\alpha})(1-\logit^{-1}(\tilde{\alpha})),
	\end{align*}
	and the corresponding density function is obtained by a change of variables, as in the sign-restricted case above. Notice that the upper and lower bounds can be parameters themselves, in which case the dependencies need to be accounted for when calculating the Jacobian.
	
	Besides sign and bound restrictions, ordering restrictions related to the relative sizes of the effects of the shocks are often imposed. For instance, we might want to declare a priori that a certain shock has a greater impact effect on a given variable than any other shock. As an example in the context of monetary policy, we could require that the impact effect of a contractionary monetary policy shock on the interest rate be positive and greater in absolute value than that of any other shock.\footnote{The impact effect must be greater in absolute value, not just greater than the impact effect of any other shock because otherwise the estimated effect of the monetary policy shock may change sign and switch columns. In such a case, the column assigned to the monetary policy shock could correspond to a shock with a small estimated effect in magnitude, yet greater than that of the estimated monetary policy shock.}
    
    These ordering restrictions can be implemented by imposing restrictions on a row of the impact matrix $B$. In the case of the monetary policy example, let $(\alpha_1,\dots, \alpha_N)$ denote the row of $B$ corresponding to the interest rate, where $\alpha_N$ denotes the impact effect of the monetary policy shock. Formally, we sample a vector $(\alpha_1,\dots, \alpha_N)\in\R^N$ subject to the restrictions
	\begin{align*}
		\alpha_N &> 0 \\
		-\alpha_N < \alpha_i &< \alpha_N \ \forall i\in\{1,\dots, N-1\}.
	\end{align*}
	By defining $\tilde{\alpha}\in\R^N$, the transformation can be written componentwise as
	\begin{align*}
		\alpha_N &= e^{\tilde{\alpha}_N} \\ \nonumber
		\alpha_i &= -\alpha_N + (\alpha_N - (-\alpha_N)) \logit^{-1}(\tilde{\alpha}_i) \\ \nonumber
		&= \alpha_N(2\logit^{-1}(\tilde{\alpha}_i)-1) \\
		&= e^{\tilde{\alpha}_N}(2\logit^{-1}(\tilde{\alpha}_i)-1)
	\end{align*}
	The partial derivatives of $\alpha_N$ needed for the Jacobian are
	\begin{align*}
		\frac{\partial \alpha_N}{\partial \tilde{\alpha}_N} &= e^{\tilde{\alpha}_N} & \frac{\partial \alpha_N}{\partial \tilde{\alpha}_i} &= 0 \:\forall i\in\{1,\dots,N-1\}.
	\end{align*}
	For the remaining elements $i\in\{1,\dots, N-1\}$, we have
	\begin{align*}
		\frac{\partial \alpha_i}{\partial \tilde{\alpha}_N} &= e^{\tilde{\alpha}_N}\left( 2\logit^{-1}(\tilde{\alpha}_i) -1 \right) = \alpha_i \\
		\frac{\partial \alpha_i}{\partial \tilde{\alpha}_i} &= 2e^{\tilde{\alpha}_N}\logit^{-1}(\tilde{\alpha}_i) \left( 1-\logit^{-1}(\tilde{\alpha}_i) \right) \\
		\frac{\partial \alpha_i}{\partial \tilde{\alpha}_j} &= 0,\ i\neq j,\ j\neq N.
	\end{align*}
	Thus, the Jacobian matrix has non-zero entries only on the diagonal and in the last column, implying that the determinant is just the product of the diagonal elements,
	\begin{align*}
		|J(\tilde{\alpha})| &= e^{N\tilde{\alpha}_N} \cdot 2^{N-1} \cdot \prod_{i=1}^{N-1} \logit^{-1}(\tilde{\alpha}_i)\left( 1-\logit^{-1}(\tilde{\alpha}_i) \right).
	\end{align*}
	If the positive element is assigned to a position other than $N$, the absolute value of the Jacobian determinant remains unchanged, as a permutation only multiplies the Jacobian by $\pm 1$.

    In an application one would decide on the set of the identifying restrictions imposed on the model, and then implement them using a combination of the aforementioned transformations or other transformations that are feasible in that particular setting. The Jacobian of these transformations is then incorporated into the density function in \eqref{eq:auxiliary_density}.
	
	To illustrate these methods in practice, let us consider a two-variable model with a single lag and no constant. Suppose that the variables are quantity and price, and the supply and demand shocks are identified using sign restrictions on the impact responses of the variables. The responses have opposite signs for the supply shock and the same sign for the demand shock. If the first shock is the supply shock, the first column of the impact matrix $B$ thus contains one positive and one negative element, whereas the second column contains two positive elements.
	
	To sample from the model, we define an unrestricted auxiliary vector $\theta\in\R^8$. The first four entries of this vector correspond to the elements of the impact matrix $B$, and the last four entries correspond to the elements of the VAR matrix $\tilde{A}_1$. As no restrictions are imposed on the elements of $\tilde{A}_1$, they enter  $\theta$ as such. Specifically,
	\begin{align*}
		\theta = (\theta_{11}^B, \theta_{21}^B, \theta_{12}^B, \theta_{22}^B, \tilde{a}_{11}, \tilde{a}_{21}, \tilde{a}_{12}, \tilde{a}_{22}).
	\end{align*}
	To impose the sign restrictions, we use the exponential transformation. Thus, the structural parameter matrices become
	\begin{align*}
		B(\theta) &= \begin{bmatrix}
			\exp({\theta_{11}^B}) &  \exp({\theta_{12}^B}) \\
			-\exp({\theta_{21}^B}) & \exp({\theta_{22}^B})
		\end{bmatrix}, & 
		\tilde{A}_1(\theta) &= \begin{bmatrix}
			\tilde{a}_{11} & \tilde{a}_{12} \\
			\tilde{a}_{21} & \tilde{a}_{22}
		\end{bmatrix},
	\end{align*}
	and the Jacobian matrix $J$ is diagonal and given by
	\begin{align*}
		J(\theta) = \begin{bmatrix}
			\exp({\theta_{11}^B}) \\
			& \exp({\theta_{21}^B}) \\
			&&-\exp({\theta_{12}^B}) \\
			&&&\exp({\theta_{22}^B}) \\
			&&&& I_4
		\end{bmatrix},
	\end{align*}
	where $I_4$ denotes the $4\times 4$ identity matrix. Therefore, the absolute Jacobian determinant is
	\begin{align*}
		|\det(J(\theta))| = \prod_{i,j} \exp({\theta_{ij}^B}).
	\end{align*}
	Given  the Jacobian and the matrices $B,\tilde{A}_1$, we can now proceed to evaluate the density using formula.\eqref{eq:auxiliary_density}.

	\subsection{Estimation Algorithm} \label{sec:algorithm}

    Because the density function \eqref{eq:auxiliary_density} does not facilitate direct sampling, we employ an MCMC method. Specifically, we use the No-U-Turn Sampler (NUTS) of \textcite{hoffman2014}, which is a variant of the Hamiltonian Monte Carlo (HMC) algorithm. An HMC algorithm explores the target distribution by simulating Hamiltonian dynamics, which allows the sampler to construct long trajectories while yielding a high acceptance probability by utilizing the gradients of the target log-density. At a high level, the HMC sampler can be thought of as a Metropolis-Hastings algorithm with a complicated proposal distribution. Its performance depends on tunable parameters $\varepsilon, L$ and $M$. 
    Starting from the current draw, the HMC sampler simulates a trajectory by integrating Hamiltonian dynamics using $L$ steps of size $\varepsilon$ with a mass matrix $M$, whose optimal choice is the inverse of the covariance matrix of the target posterior distribution \parencite[Section 4.3]{betancourt2018}. The resulting endpoint is then used as a proposal.

    The NUTS solves the issue of tuning $L$ by dynamically expanding the path taken within each iteration until it starts to make a U-turn towards the initial state. The implementation of the NUTS in Stan further provides routines for adapting the step size and metric matrix during warmup, and calculates the gradient of the target log-density by automatic differentiation. These properties reduce the scope of implementation error and ensure that sampling can be done efficiently in a wide variety of situations. Further details on the HMC sampler and NUTS can be found in Appendix \ref{app:algorithm}.

    Beyond the built-in efficiency gains provided by Stan, sampling performance can be further improved through appropriate parameterization and initialization. As discussed in detail in Appendix \ref{app:non-centered}, if only impact restrictions are considered, sampling efficiency can be enhanced by adopting a non-centered parameterization, where we sample a standard normal (instead of a general multivariate normal) vector \parencite[see][]{betancourt2013}. Moreover, finding reasonable initial values is of utmost importance, and in the applications of Section \ref{sec:empirical} we initialize the sampler by drawing from the normal-inverse Wishart posterior and constructing an orthogonal matrix compatible with the imposed restrictions one column at a time, following the method of \textcite[Appendix I]{arias2025elliptical}.
    
    It is important to note that set identification poses challenges that persist even under efficient MCMC implementations. In particular, it induces flat posterior regions along observationally equivalent parameter manifolds. Because the likelihood function provides no information in these directions, local MCMC updates receive little guidance on how to move within the identified set, which can hinder effective posterior exploration. \textcite{kitagawa2025} formally show that this issue can be severe for random-walk Metropolis algorithms, with mixing deteriorating as the size of the identified set increases. However, the mechanism underlying this result does not necessarily extend to Hamiltonian Monte Carlo. This is the case because, in contrast to random-walk Metropolis, HMC algorithms rely on Hamiltonian trajectories rather than diffusive local proposals, and therefore they explore flat posterior regions differently. Moreover, modern HMC implementations (including the one used in this paper) employ automatic step-size and mass-matrix adaptation during warmup, which further helps mitigate these exploration difficulties.
     
    To demonstrate that an HMC sampler can effectively explore the posterior under set identification, we revisit a simulation experiment that  \textcite{kitagawa2025} considered to illustrate the opposite. This example does not involve a VAR model, but we adopt it because it features a known and simple posterior geometry, which allows for a transparent diagnosis of convergence issues. Specifically, we generate data from a normal distribution with mean  $\mu=\sum_{i=1}^{k} \mu_i$ and variance unity, where the parameters $\mu_i$, $i=1,\dots,k$, are not identified for $k \geq 2$. We study the performance of Hamiltonian Monte Carlo in this setting for $k = 2$ and $k = 1{,}000$, with the latter case reflecting the high-dimensional parameter spaces encountered in VAR applications. For $k=2$ and $k=1{,}000$, we generate 1{,}000 observations from $N(2,1)$ and $N(10,1)$ distributions, respectively. The larger mean in the latter case reflects the greater number of additive components.
	
	The left panel of Figure \ref{fig:unidentified_normal1} presents a trace plot based on 12{,}000 draws for the case $k=2$, of which 2{,}000 are warmup draws. Following \textcite{kitagawa2025}, the warmup draws are included. The right panel depicts the histograms of the obtained sample (without the warmup iterations) together with the analytical marginal densities, when the support of each $\mu_i$ is bounded between $-10$ and 10.\footnote{In the case $k=2$, the marginal density is proportional to $\Phi(\sqrt{n}(\mu_i+10-\bar{x})) - \Phi(\sqrt{n}(\mu_i-10-\bar{x}))$, where $\Phi$ is the standard normal CDF, $n$ is the sample size, and $\bar{x}$ is the sample mean. For $n=1{,}000$ this density is approximately uniform.} The plots indicate that the NUTS (as implemented in Stan) is able to traverse the entire posterior distribution of $\mu$ without apparent difficulties and produces draws that closely match the analytical marginal densities. This contrasts with the behavior documented by \textcite{kitagawa2025} for a standard HMC algorithm in the same set-identified setup, where the sampler failed to adequately explore the parameter space associated with $(\mu_1,\mu_2)$.
    
	\begin{figure}[htbp]
		\centering
		\includegraphics[width=0.8\textwidth]{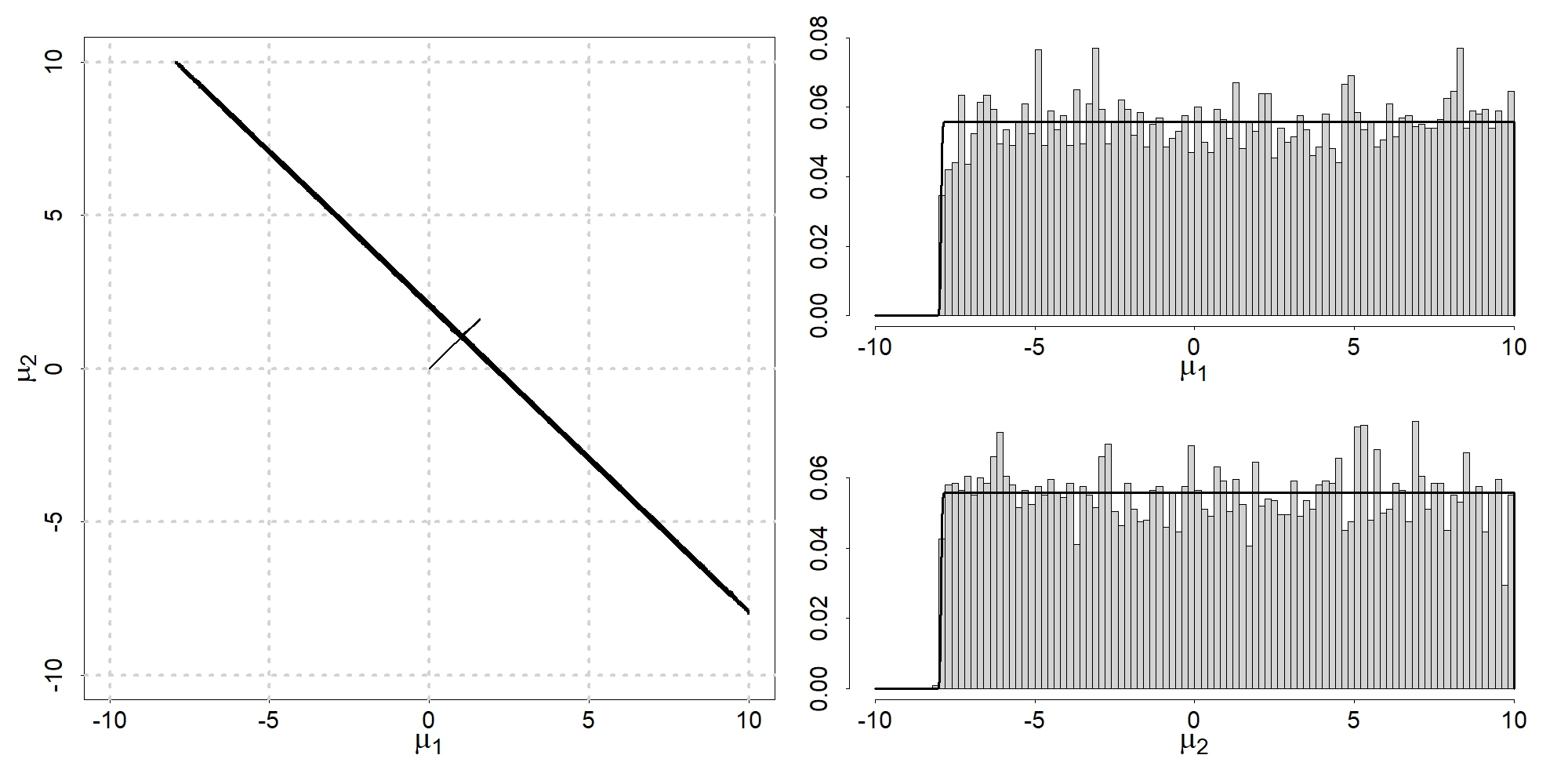}
		\caption{Simulation results for the $k=2$ case. The left and right panels depict the traceplot of $(\mu_1, \mu_2)$ and marginal distributions of $\mu_1$ and $\mu_2$, respectively. The warmup iterations are included in the traceplot but not in the histograms. The solid lines overlaid on the histograms depict the analytical marginal densities.}
		\label{fig:unidentified_normal1}
	\end{figure}
	
	The NUTS performs well even in a high-dimensional setting, as illustrated in Figure \ref{fig:unidentified_normal2} that presents a scatter plot of the first two components of $\mu$ and the estimated autocorrelation functions for these two parameters when $k = 1{,}000$. The draws are dispersed without visible clustering, and the autocorrelations decay rapidly toward zero. Corresponding plots for other pairs of parameters look essentially identical. These results indicate that, even in this high-dimensional setting, the NUTS produces high-quality samples and effectively explores the full target distribution.
	
	\begin{figure}[htbp]
		\centering
		\includegraphics[width=0.8\textwidth]{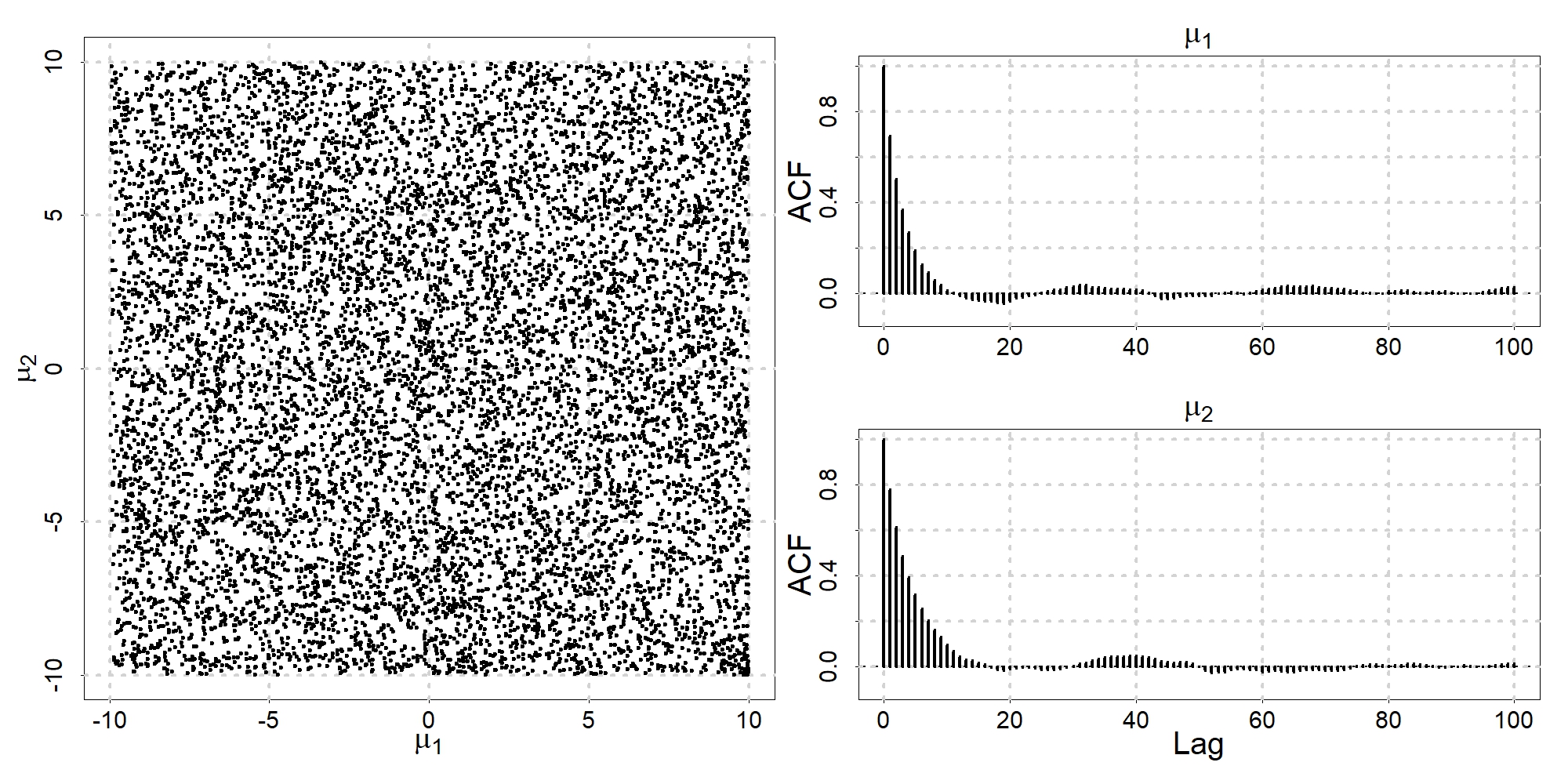}
		\caption{Simulation results for the $k=1{,}000$ case. The left and right panels depict the scatterplot of $(\mu_1, \mu_2)$ and autocorrelation functions of $\mu_1$ and $\mu_2$, respectively. The warmup iterations are not included.}
		\label{fig:unidentified_normal2}
	\end{figure}
    
    As \textcite{bacchiocchi2025} point out, besides flat posterior regions, a related but distinct challenge can arise when the identified set is multimodal because separated regions of posterior mass are difficult for MCMC algorithms (including HMC) to explore effectively. However as noted by \textcite{kitagawa2025}, such local identification arises only in the case of non-homogenous restrictions, that is, when a parameter is set to a non-zero value. As we preclude such restrictions, multimodality does not arise, and hence the sampler does not need to traverse between distinct modes. This absence of multimodality follows from the fact that the likelihood depends on $B$ only through $BB'$ and is therefore invariant under orthogonal rotations, implying a continuum of observationally equivalent factorizations rather than isolated peaks.\footnote{In this paper, we impose the normalization $\det(B)>0$ and restrict attention to a single normalized representation of the structural parameters to eliminate any remaining reflection symmetry. This normalization does not restrict the identified set beyond our existing identifying assumptions. To see this, consider the case where at most $N-1$ columns of $B$ are identified, so the last column is left unrestricted. Let $B_1$ satisfy the identifying restrictions and have $\det(B_1)<0$. Define $R=\mathrm{diag}(1,\ldots,1,-1)$ and $B_2 = B_1R$, i.e., $B_2$ equals $B_1$ with the sign of its last column flipped. Then $\det(B_2)=\det(B_1)\det(R)=-\det(B_1)>0$, while the first $N-1$ columns of $B_2$ and $B_1$ are equal. Hence, $B_2$ satisfies the same identifying restrictions and has the same economic interpretation. Moreover, the map $B\mapsto BR$ is bijective with inverse $B\mapsto BR^{-1}=BR$, which guarantees that there always exists exactly one such $B_2$.} As the preceding example indicates, the HMC sampler is well suited for exploring such distributions.

    Even though the NUTS has several advantages compared to commonly used MCMC algorithms, it is important to monitor the convergence of the MCMC chain using diagnostic checks. To that end, we recommend the tools of \textcite{vehtari2021}, the split-$\hat{R}$, and the bulk and tail effective sample size (ESS) statistics. We monitor these univariate statistics for all parameters because, as noted by \textcite{brooks1998}, spurious results may be obtained by monitoring only a subset of the parameters.
    
    The $\hat{R}$ measures chain mixing by comparing within- and between-chain variances. For a well-mixed chain, it approaches unity as the number of draws increases. When running a single chain (as is typical  in the SVAR literature), $\hat{R}$ is calculated by splitting the chain into two halves. The idea of the bulk and tail ESS, in turn, is to measure sampling efficiency by adjusting the number of iterations for the autocorrelation in the sample. The bulk ESS estimates the number of effectively independent samples in the MCMC output for estimating the mean or the median of the posterior, while the tail ESS measures sampling efficiency in the tails of the distribution. \textcite{vehtari2021} recommend accepting an MCMC sample only if $\hat{R} < 1.01$ and the ESS is at least 100 for a single chain. These thresholds ensure adequate mixing and a sufficiently low Monte Carlo standard error for estimation of the quantities of interest. 
    
    The univariate ESS has a multivariate counterpart \parencite{vats2019} that also accounts for correlations between the parameters when evaluating sample quality. Following \textcite{arias2025elliptical}, we report this statistic to assess the performance of the sampler for the IRF parameters of interest in Section \ref{sec:empirical}. However, we rely on $\hat{R}$ as the primary convergence diagnostic because it directly assesses chain mixing and can detect non-convergence even when the ESS is large. In contrast, the multivariate ESS is primarily a measure of sampling efficiency and may fail to indicate lack of convergence when the chain has not fully explored the target distribution.

    \subsection{Efficient Evaluation of the N-Inverse Wishart Density}\label{sec:efficient_evaluation}
	
	When we use a Hamiltonian Monte Carlo (HMC) algorithm to estimate the parameters, we need to repeatedly evaluate the posterior density of the structural parameters. As is evident from equation \eqref{eq:auxiliary_density}, and given the prior we have assumed, this entails repeated evaluation of the normal-inverse Wishart density
    	\begin{align}
		NIW_{(\tilde{\nu}, \tilde{\Omega}, \tilde{\Phi}, \tilde{S})}(\Sigma, \tA) \propto |\Sigma|^{-\frac{\tilde{\nu}+N+1}{2}} e^{-\frac{1}{2}\tr(\tilde{S}\Sigma^{-1})} |\Sigma|^{-\frac{Np+1}{2}} e^{ -\frac{1}{2} \vec(\tilde{A}-\tilde{\Phi})' (\Sigma\otimes\tilde{\Omega})^{-1} \vec(\tilde{A}-\tilde{\Phi}) }. \label{eq:niw_density}
	\end{align}

    We contribute to the literature by presenting a fast evaluation method for the N-IW density. Although the N-IW SVAR model is ubiquitous in the literature, computation of its density function is not routinely performed, as there are methods to generate N-IW random variables using direct sampling. Our method makes extensive use of solvers for triangular systems of equations, and the Cholesky factors of the elements of the covariance matrix of the N-IW distribution. Thus, it has similarities with the work of \textcite{carriero2016, chan2021}, who use the said Cholesky factors to efficiently generate normally distributed random variables with a Kronecker covariance structure.

	Evaluation of $|\Sigma|$ is  standard and uses the diagonal elements of the Cholesky factor of $\Sigma$. We obtain the Cholesky factor $L_\Sigma$ of $\Sigma$ by a Cholesky decomposition of $\Sigma = BB'$; another option would be to apply the $LQ$ decomposition\footnote{Software packages do not always provide a direct implementation of the $LQ$ decomposition. In that case, we can apply the QR decomposition to $B' = \tilde{Q}R$ and set $L = R'$ and $Q = \tilde{Q}'$. Another aspect to note is that the QR-decomposition for a square invertible matrix is unique only up to the signs of the diagonal of $R$. We assume the convention that these are positive, which yields a unique decomposition.} to the matrix $B$ and take $L$ to be the Cholesky factor of $\Sigma$. In our experience, the former is faster, but the latter provides the matrix $Q$ that might be of interest. Furthermore, as $|\det(B)| = \sqrt{\det(\Sigma)}$, there is no need to evaluate the determinant of $B$ separately to implement the density \eqref{eq:auxiliary_density}.
	
	The trace term in the first exponential on the right-hand side of \eqref{eq:niw_density} can be evaluated using the Cholesky decompositions $\tilde{S} = L_{\tilde{S}} L_{\tilde{S}}'$ and $\Sigma = L_\Sigma L_\Sigma'$. The value of $\tilde{S}$ does not change between the iterations of the sampler, so it is pre-computed and cached. Using the cyclic property of trace, we can rearrange the matrices as
	\begin{align}
		-\frac{1}{2}\tr(\tilde{S}\Sigma^{-1}) &= -\frac{1}{2}\tr(L_{\tilde{S}} L_{\tilde{S}}'L_\Sigma'^{-1} L_\Sigma^{-1}) \nonumber \\
		&= -\frac{1}{2}\tr(L_\Sigma^{-1} L_{\tilde{S}} L_{\tilde{S}}'L_\Sigma'^{-1} ) \nonumber  \\
		&= -\frac{1}{2}\tr(L_\Sigma^{-1} L_{\tilde{S}} [L_\Sigma^{-1} L_{\tilde{S}}]' ) \nonumber \\
		&= -\frac{1}{2}\tr(ZZ'),
	\end{align}
	where $Z = L_\Sigma^{-1} L_{\tilde{S}} $. Matrix $Z$ can be solved efficiently using forward substitution from the equation $L_\Sigma Z = L_{\tilde{S}}$. Finally, by the properties of the trace of a matrix product,
	\begin{align}
		-\frac{1}{2}\tr(\tilde{S}\Sigma^{-1}) &= -\frac{1}{2}\sum_{i=1}^{N}\sum_{j=1}^{N} Z_{ij}^2,
	\end{align}
	where $Z_{ij}$ denotes the $(i,j)$ element of $Z$.
	
    The computationally most intensive part is the evaluation of the latter exponential term in \eqref{eq:niw_density}. As in the case of $\tilde{S}$, the Cholesky decomposition $\tilde{\Omega} = L_{\tilde{\Omega}} L_{\tilde{\Omega}}'$ is also pre-computed and cached. Using the properties of the Kronecker product, we get
	\begin{align}
		-\frac{1}{2} \vec(\tilde{A}-\tilde{\Phi})'& (\Sigma\otimes\tilde{\Omega})^{-1}  \vec(\tilde{A}-\tilde{\Phi}) \nonumber \\
		&= -\frac{1}{2} \vec(\tilde{A}-\tilde{\Phi})' \Big((L_\Sigma L_\Sigma') \otimes (L_{\tilde{\Omega}} L_{\tilde{\Omega}}')\Big)^{-1} \vec(\tilde{A}-\tilde{\Phi}) \nonumber \\
		&= -\frac{1}{2} \vec(\tilde{A}-\tilde{\Phi})' (L_\Sigma'^{-1} \otimes L_{\tilde{\Omega}}'^{-1})(L_\Sigma^{-1}  \otimes L_{\tilde{\Omega}}^{-1}) \vec(\tilde{A}-\tilde{\Phi}) \nonumber \\
		&= -\frac{1}{2} \left[  (L_\Sigma^{-1} \otimes L_{\tilde{\Omega}}^{-1}) \vec(\tilde{A}-\tilde{\Phi}) \right]' \left[  (L_\Sigma^{-1} \otimes L_{\tilde{\Omega}}^{-1}) \vec(\tilde{A}-\tilde{\Phi}) \right] \nonumber \\
		&= -\frac{1}{2} \vec\left(  L_{\tilde{\Omega}}^{-1} (\tilde{A}-\tilde{\Phi}) L_\Sigma'^{-1} \right)' \vec\left(  L_{\tilde{\Omega}}^{-1} (\tilde{A}-\tilde{\Phi}) L_\Sigma'^{-1} \right).
	\end{align}
	Let us denote $W =  L_{\tilde{\Omega}}^{-1} (\tilde{A}-\tilde{\Phi}) L_\Sigma'^{-1}$. We can now solve for $W$ in two steps using forward substitution. First, solve $C$ from $CL_\Sigma' = (\tilde{A}-\tilde{\Phi})$. Next, solve $W$ from $L_{\tilde{\Omega}} W = C$. Finally, we get
	\begin{align}
		-\frac{1}{2} \vec(\tilde{A}-\tilde{\Phi})' (\Sigma\otimes\tilde{\Omega})^{-1} \vec(\tilde{A}-\tilde{\Phi}) &= -\frac{1}{2} \sum_{i=1}^{Np+1}\sum_{j=1}^{N} W_{ij}^2,
	\end{align}
	where $W_{ij}$ denotes the $(i,j)$ element of $W$. This calculation is substantially faster than explicitly computing the $N(Np+1) \times N(Np+1)$ matrix $\Sigma\otimes\tilde{\Omega}$ and working with it.

	\section{Empirical Applications} \label{sec:empirical}

    This section revisits two empirical applications to showcase the benefits of our approach: In Section \ref{sec:oilmarket}, we identify three shocks in the small-scale global oil market model of \textcite{kilian2014}, and in Section \ref{sec:35variable}, we consider the large-scale U.S. economy model recently analyzed by \textcite{chan2025}. In each case, the empirical findings coincide with prior results based on the same identifying restrictions, but we highlight the advantages of our method over existing alternatives. Also \textcite{arias2025elliptical} consider these applications to illustrate the performance of their Gibbs sampler, which facilitates straightforward comparison. As their approach represents the current state of the art in inference under identifying sign restrictions, it provides a natural benchmark for our analysis. In both applications, our method achieves convergence approximately two to four times faster than theirs, as measured by the time required for the $\hat{R}$ statistic to fall below the conventional threshold. The obtained samples also exhibit lower autocorrelations, as reflected in greater bulk and tail ESS in both cases. In the large-scale model, our approach is also significantly faster than the algorithm of \textcite{chan2025}, although they use the asymmetric conjugate prior \parencite[see][]{chan2022} rather than the normal-inverse Wishart prior. In Section \ref{sec:35variable}, we also show that our approach accommodates exclusion restrictions at no additional computational cost.
	
	\subsection{Small Oil Market Model} \label{sec:oilmarket}
    
	To illustrate how our approach handles contemporaneous and dynamic sign restrictions, as well as elasticity restrictions, we estimate the oil market model of \textcite{kilian2014} on monthly data from 1973:M2 to 2009:M8.\footnote{The data are obtained from the replication package for \textcite{kilian2014} in the Journal of Applied Econometrics Data Archive.} It contains four variables, namely, the change in global oil production, a measure of global real activity, the real price of oil, and the changes in global oil inventories, as well as an intercept, and monthly dummy variables. Following \textcite{kilian2014}, the lag length is set at 24. The prior is flat and proportional to $|\Sigma|^{-(N+1)/2}$, which results in a conjugate normal-inverse Wishart posterior for the reduced-form parameters. Following the identification strategy of \citeauthor{kilian2014}, we identify the flow supply, flow demand, and speculative demand shocks. In particular, we impose the sign and elasticity restrictions on the impact responses, as well as the dynamic sign restrictions summarized in Table \ref{tab:kilian_restrictions}. These restrictions result in a tightly identified model, and thus random sampling of the orthogonal rotation matrix $ Q $ would result in extremely few accepted posterior draws, as discussed in Section \ref{sec:id_restrictions}.
	
	\begin{table}[htbp]
    \centering
            \begin{minipage}{0.8\textwidth}
		\footnotesize
		\caption{Sign restrictions in the VAR model}
		\label{tab:kilian_restrictions}
                \resizebox{\linewidth}{!}{%
		\begin{tabular}{lccc}
			\toprule
			& Flow supply shock & Flow demand shock & Speculative demand shock \\
			\midrule
			Oil production     & --1 & +1 & +1 \\
			Real activity      & --1 & +1 & --1 \\
			Real price of oil  & +1  & +1 & +1 \\
			Oil inventories        &  0  &  0  & +1 \\
			\midrule
			\multicolumn{4}{c}{Dynamic sign restrictions (horizons 1--12)} \\
			\midrule
			Real activity & --1 & 0 & 0 \\
			Real price of oil & +1 & 0 & 0 \\
			\midrule
			\multicolumn{4}{c}{Elasticity bounds} \\
			\midrule
			Price elasticity of oil supply & & $(0,0.025)$ & $(0,0.025)$\\
			\bottomrule
		\end{tabular}
        		}
        \scriptsize{Negative and positive sign restrictions are indicated by $-1$ and $+1$, respectively, while 0 indicates no restriction.}
        \end{minipage}
	\end{table}
	
	Our approach solves the issue of tight identification by transforming the constrained structural parameters into unconstrained auxiliary parameters. Specifically, given that we have dynamic restrictions up to the 12th IRF horizon, we work with the constrained structural parameters
	\begin{align}
		\Pi_{12} = \left( \tilde{c}, \tilde{D}, B, \Psi_1,\dots, \Psi_{12}, \tA_{13},\dots \tA_{24}\right),
	\end{align}
	where $ \tilde{D} $ is a $4 \times 11$ matrix of dummy coefficients, with each column corresponding to one dummy variable. These parameters are transformed into an unconstrained vector $\theta$ using the transformations described in section \ref{sec:enforcing_restrictions}, specifically the exponential and inverse logit transformations on $B$ and $\Psi_{1},\dots, \Psi_{12}$. We  run 2{,}000 warmup iterations, followed by 100,000 sampling iterations.

    Figure \ref{kilian2014_irf} depicts the posterior IRFs. As expected, the results are similar to those reported by \textcite{kilian2014, arias2025elliptical}. In particular, following a negative flow supply shock, oil production, global economic activity and oil inventories decline, while the real price of oil increases. A positive flow demand shock leads to a persistent increase in global economic activity and the real price of oil. It also induces a positive response in oil production, which peaks after about one year and then returns to its pre-shock level. Finally, following a positive speculative demand shock, the real price of oil and inventories increase, while global economic activity and oil production experience a slight but persistent decline.
	
	\begin{figure}[htbp]
		\centering
		\includegraphics[width=0.8\textwidth]{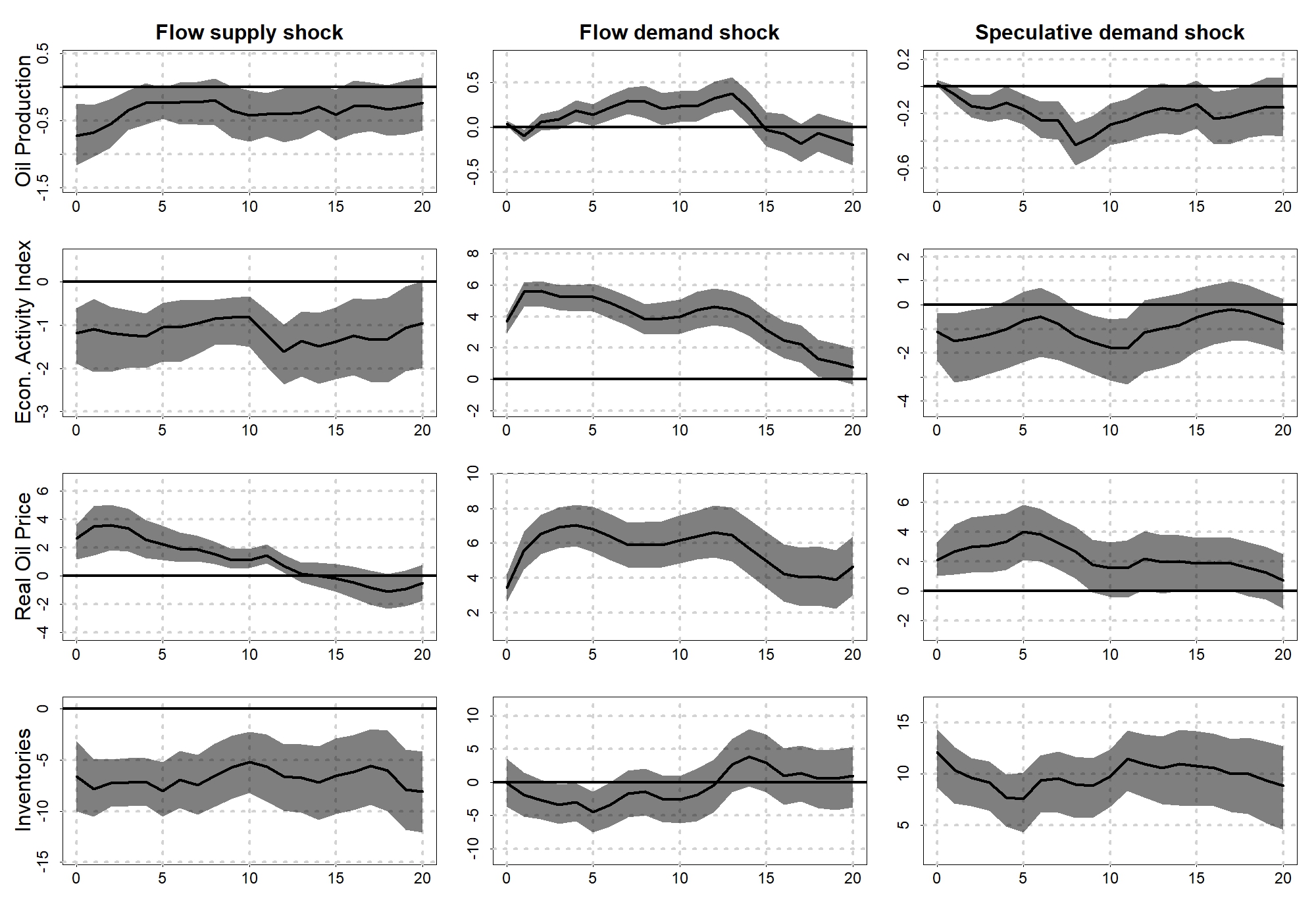}
		\caption{Impulse responses. The solid lines indicate posterior medians, and the shaded areas are the 68\% pointwise posterior probability regions.}
		\label{kilian2014_irf}
	\end{figure}
    
    We now turn to the properties of our sampler. Figure \ref{kilian2014_traceplot} presents trace plots of the unnormalized posterior log-density and a parameter ($B_{11}$) from the impact matrix $B$. For clarity, only the first 4{,}000 iterations, including warmup, are shown. As can be seen, both the log-density and the parameter are initiated near their typical level. This indicates that our method for selecting initial values by direct sampling works well and initializes the chain near the typical set. The traceplots do not exhibit problematic behavior such as nonstationarity or high autocorrelation after warmup. Furthermore, the variance of $B_{11}$ is lower during the early warmup phase than during sampling, which reflects the dynamic adaptation of the step size and inverse metric. This adaptation leads to more efficient sampling after warmup. The remaining structural parameters exhibit similar behavior and their traceplots are therefore not shown.
	
	\begin{figure}[htbp]
		\centering
		\includegraphics[width=0.8\textwidth]{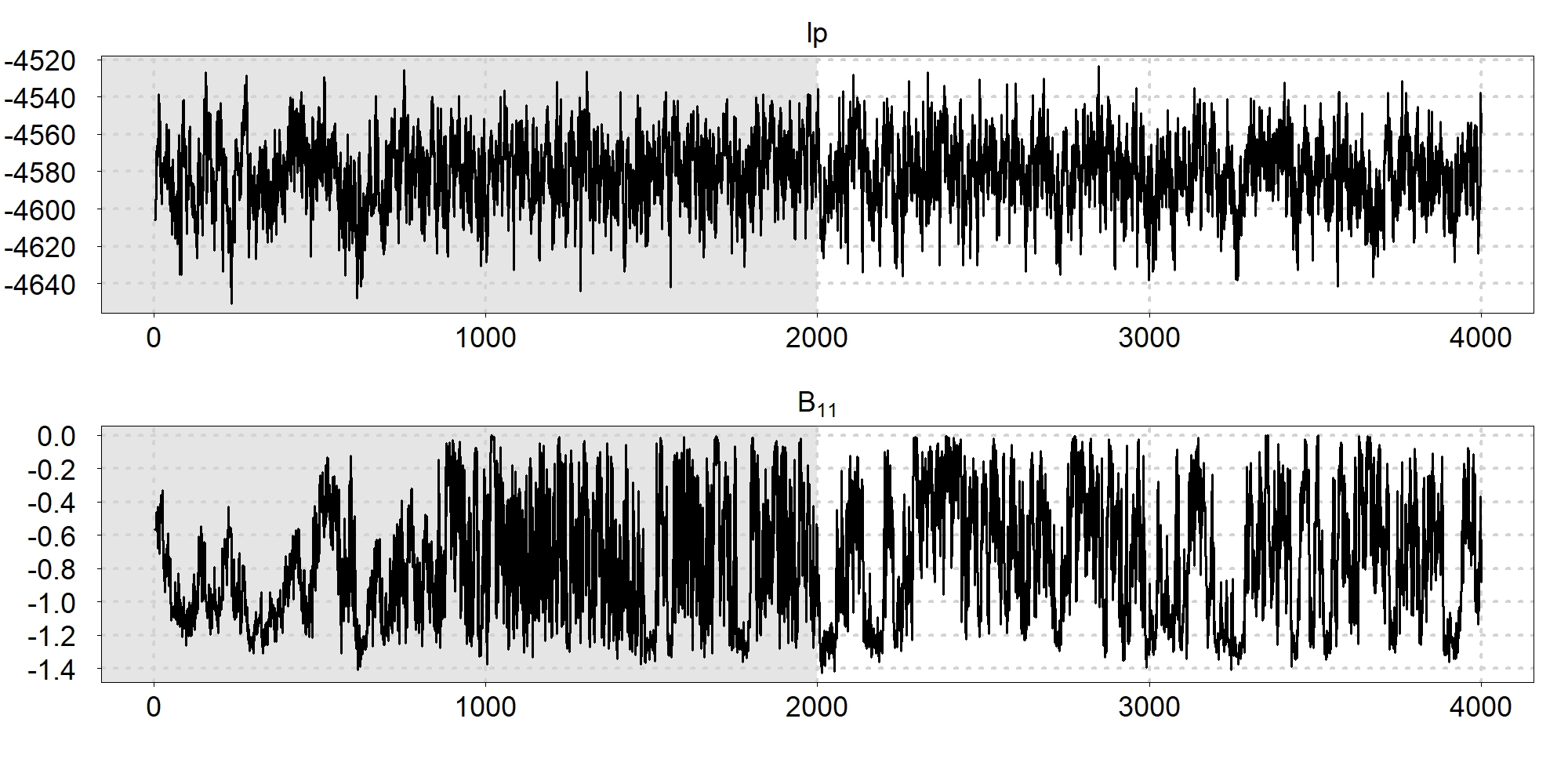}
		\caption{The traceplot of the unnormalized posterior log-density (top) and the impact effect of oil production to a flow supply shock (bottom) in the oil market model. The shaded area indicates warmup iterations.}
		\label{kilian2014_traceplot}
	\end{figure}
	
    To compare our approach with the elliptical slice sampler of \textcite{arias2025elliptical}, we implemented their sampler in Matlab\footnote{We thank Jonas Arias for generously sharing their replication package, which greatly facilitated our work. Building on this foundation, we implemented the elliptical slice sampler in R and Matlab, with the latter yielding a substantially faster implementation. We additionally experimented with implementing the large SVAR NUTS sampler of the next section in Matlab, which ran approximately one-third faster than the Stan version. Nonetheless, we base all reported results on Stan because Stan’s mature and widely tested implementation substantially reduces the scope for implementation errors and, through its automatic differentiation, obviates the need to derive and code gradients by hand, providing important practical and reliability advantages.} and, following their lead, save every tenth draw from the resulting chain. We run this sampler for a total of four million iterations. To assess convergence, we use the $\Rhat$ statistic (see Section \ref{sec:algorithm}), calculated from a single chain by considering samples of increasing length, with warmup iterations included.\footnote{Because there is no clear way to ensure that the discarded warmup samples are comparable across samplers, we include the warmup iterations in the analysis.}  As recommended by \textcite{vehtari2021}, we interpret values of $\Rhat$ above 1.01 as an indication of potential convergence issues.

    Figure \ref{fig:rhats_kilian} plots the maximum values of the $\Rhat$ statistic for samples obtained using both samplers. The reported sample size is increased in increments of 1,000 iterations. For the elliptical slice sampler, the maximum $\Rhat$ is the maximum over the elements of $\Sigma$ and $Q$ , whereas for the NUTS it is the maximum over the elements of $B$.\footnote{In our experience, the unidentified parameters $B$ and $Q$ mix the most slowly. To reduce computational cost, we therefore do not compute $\Rhat$ for all parameters.} As can be seen in the figure, $\Rhat$ converges in fewer iterations when using the NUTS. For the elliptical slice sampler, $\Rhat$ first falls below the threshold after approximately 27 thousand iterations (corresponding to roughly 13 minutes), but requires about 130 thousand saved draws (64 minutes) to remain consistently below the threshold of 1.01. In contrast, the NUTS reaches the 1.01 threshold after roughly nine thousand iterations, corresponding to a runtime of 28 minutes.

    \begin{figure}[htbp]
		\centering
		\includegraphics[width=0.8\textwidth]{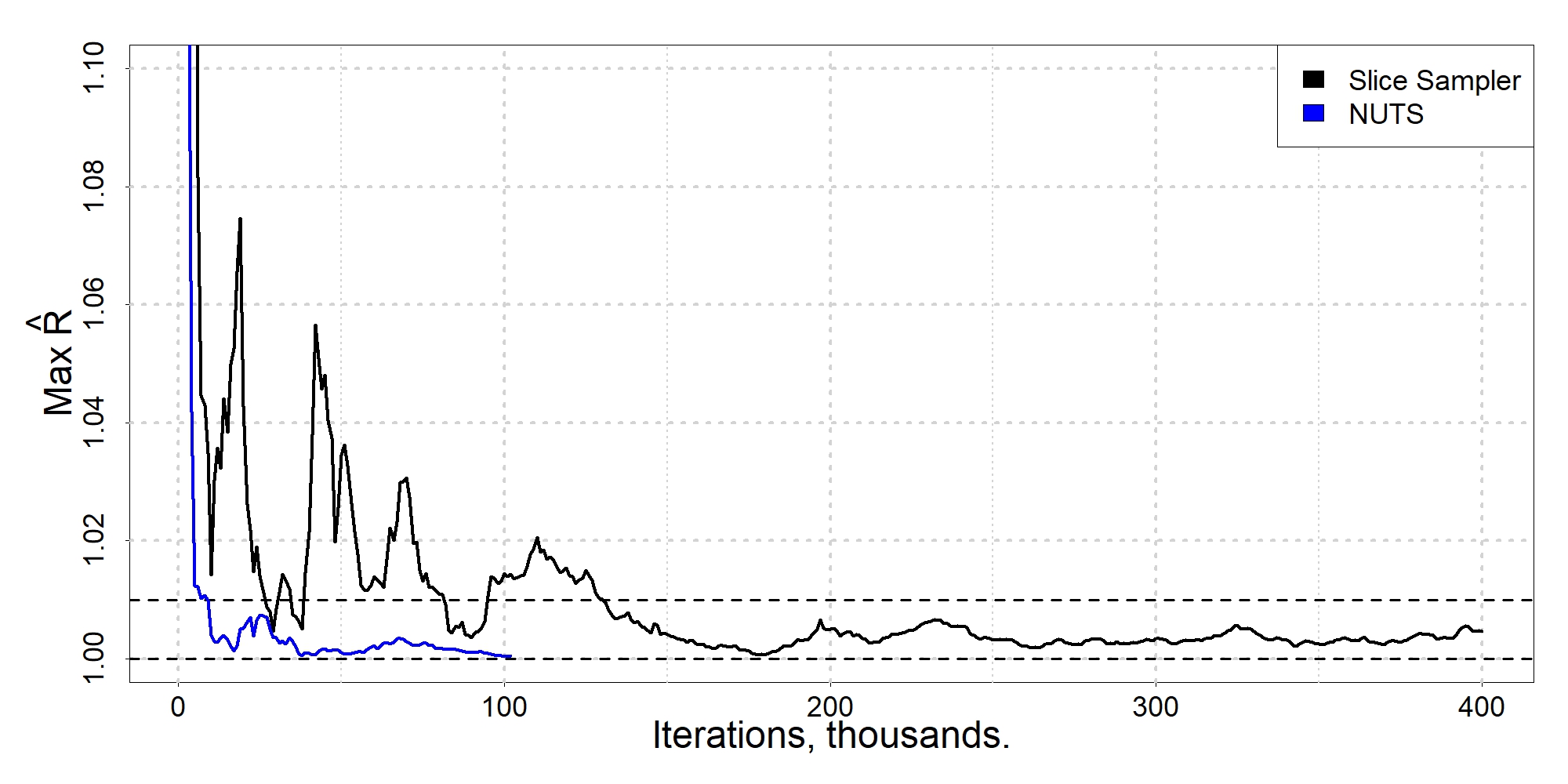}
		\caption{The maximum $\Rhat$ statistic for samples of different sizes from the elliptical slice sampler and NUTS. Horizontal dashed lines are located at 1.00 and 1.01. $\Rhat$ values are calculated for the elements of $B$ for the NUTS, and the elements of $\Sigma$ and $Q$ for the elliptical slice sampler. The vertical axis is restricted to $(1.0, 1.1)$ for clarity.}
		\label{fig:rhats_kilian}
	\end{figure}

    To investigate the sampling efficiency of the samplers, we consider the bulk and tail ESS measures of \textcite{vehtari2021}. In line with the $\Rhat$, they are calculated for all elements of $B$ (for the NUTS), and $\Sigma$ and $Q$ (for the slice sampler), again using chains of increasing length and including warmup iterations. Table \ref{tab:ess_kilian} presents the minimum ESS values per 1,000 iterations over these parameters for the NUTS and the elliptical slice sampler. As can be seen, the NUTS is more efficient in terms of the ESS per iteration under both measures. This is especially evident in the case of the bulk ESS, indicating that our approach yields more reliable estimates of posterior means for a given number of iterations. Importantly, given the lengths of the sampled chains, both methods reach ESS levels at which the Monte Carlo error is negligible for inference.

    \begin{table}[htbp]
        \centering
        \caption{Minimum ESS measures per 1000 iterations. The values are calculated for elements of $B$ for the NUTS, and $\Sigma$ and $Q$ for the elliptical slice sampler.}
        \begin{tabular}{rcc}
            \toprule
            & \multicolumn{2}{c}{ESS type} \\
            \cmidrule(lr){2-3}
            Sampler & Bulk & Tail \\
            \midrule
            NUTS & 40.53 & 48.24 \\
            Slice Sampler & 2.18 & 5.66 \\
            \bottomrule
        \end{tabular}
        \label{tab:ess_kilian}
    \end{table}

    \begin{figure}[htbp]
		\centering
		\includegraphics[width=0.8\textwidth]{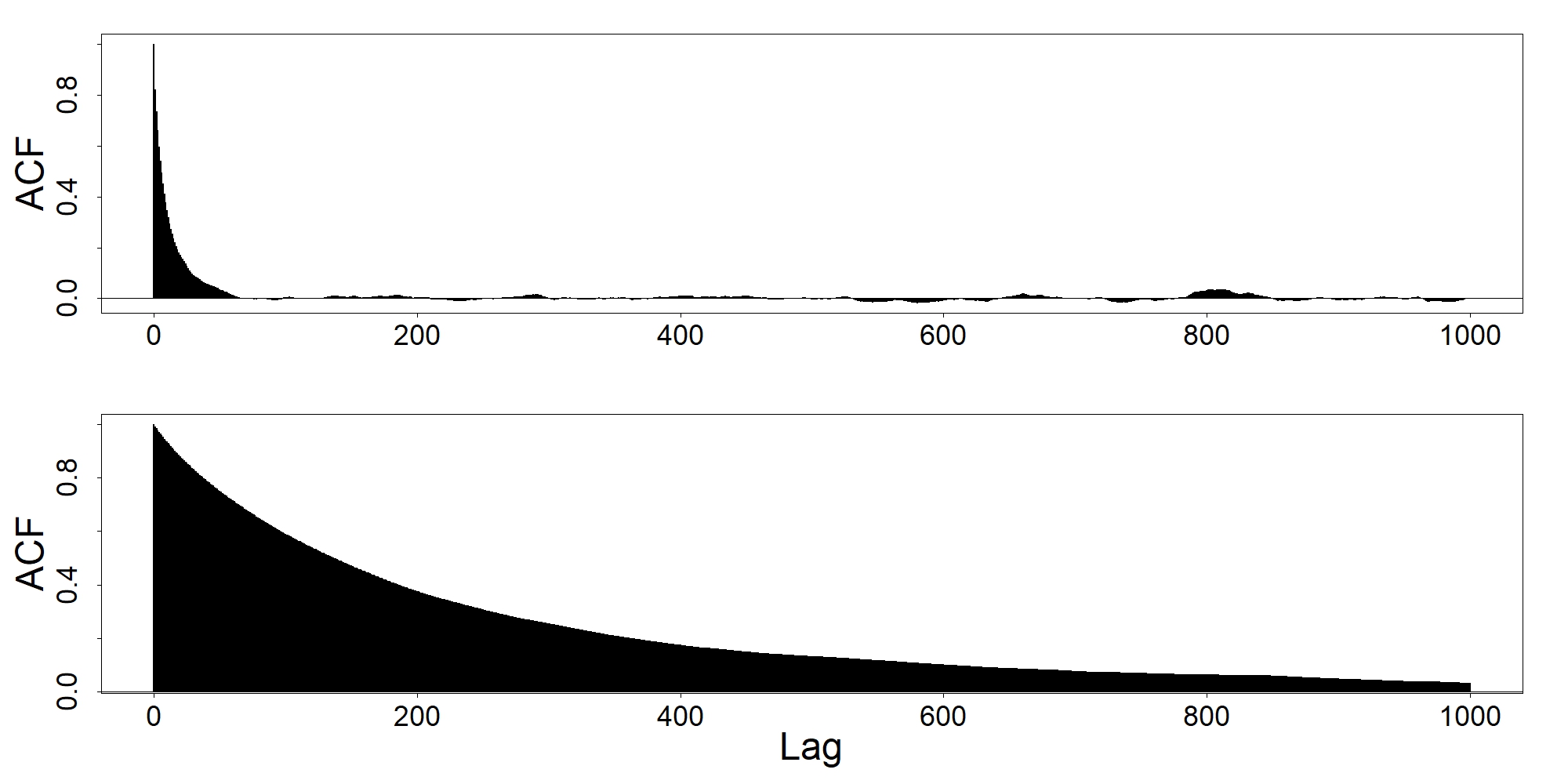}
		\caption{Autocorrelation plots of the sample of the $B_{11}$ obtained using the NUTS (top panel), and the $Q_{11}$ obtained using the elliptical slice sampler (bottom panel).}
		\label{fig:kilian_acf}
	\end{figure}

    The non-monotone behavior of the $\Rhat$ statistic and the lower ESS for the elliptical slice sampler are likely due to the high autocorrelation in the sampled chain. Figure \ref{fig:kilian_acf} presents the autocorrelation function for $B_{11}$ from the NUTS (top panel), and $Q_{11}$ from the elliptical slice sampler (bottom panel). The latter exhibits substantially higher autocorrelation, which reduces sampling efficiency. Similar patterns are observed for the other elements of the matrices \(B\) and \(Q\) in this application.

    We also assess convergence using the minimum multivariate ESS, as suggested by \textcite{vats2019}. To achieve a precision of 0.05 for the 95\% confidence region our approach requires 52 minutes of sampling, whereas the elliptical slice sampler requires 31 minutes. Precision of 0.05 implies that the Monte Carlo error is 5\% of the variability in target distribution. While this metric suggests that the elliptical slice sampler is more efficient in this setting, it should be interpreted with caution, as the multivariate ESS primarily reflects sampling efficiency and may fail to detect lack of convergence, as discussed in Section \ref{sec:algorithm}. \textcite{arias2025elliptical} report the efficiency in terms of time required to generate 1,000 multivariate ESS. The values are 4.76 and 5.17 minutes per 1,000 ESS for the NUTS (including warmup) and the elliptical slice sampler, respectively. Following \textcite{arias2025elliptical}, the multivariate ESS is calculated jointly for the first three columns of the impulse response parameters, corresponding to the identified shocks in this model.

	\subsection{Large Model of the U.S. Economy} \label{sec:35variable}
	
    As an example of a large SVAR model, we consider a 35-variable quarterly model of the U.S. economy with five lags and a constant. This application extends the work of \textcite{chan2025}, which in turn builds on \textcite{crump2021}. Following \textcite{arias2025elliptical}, we consider the sample period from 1977:Q4 to 2019:Q4.\footnote{We thank Christian Matthes for sharing their replication files.} The variables are listed in Table \ref{tab:var35}, which also gives the identifying restrictions. Following \textcite{arias2025elliptical}, both sign and ranking restrictions are imposed to identify ten structural shocks: demand, investment, financial, monetary, government spending, technology, labor supply, wage bargaining, oil price, and consumer sentiment shocks.
	
	Following \textcite{arias2025elliptical}, we place a Minnesota prior on the reduced-form parameters.\footnote{We thank Jonas Arias for sharing their prior with us.} As the model contains only contemporaneous restrictions, we estimate it using the non-centered specification (discussed brifely in Section \ref{sec:algorithm}), where instead of sampling the VAR coefficient matrices, we sample an auxiliary standard normal vector and transform it into the desired parameters $\tA$. Further details are provided in the Appendix \ref{app:non-centered}.

	\begin{table}[htbp]
		\centering
        \begin{minipage}{0.8\textwidth}
		\caption{Sign and ranking restrictions}
		\label{tab:var35}
         \resizebox{\linewidth}{!}{%
			\begin{tabular}{lrrrrrrrrrr}
				\toprule
				& \textbf{Dem} & \textbf{Inv} & \textbf{Fin} & \textbf{Mon} & \textbf{Gov} & \textbf{Tec} & \textbf{Lab} & \textbf{Wag} & \textbf{Oil} & \textbf{Con} \\
				\midrule
                \addlinespace
				\multicolumn{11}{l}{\textit{Sign restrictions}} \\
				\midrule
				GDP & +1 & +1 & +1 & -1 & +1 & +1 & +1 & +1 & +1 & +1 \\
				PCE & 0 & 0 & 0 & 0 & 0 & +1 & 0 & 0 & +1 & +1 \\
				Residential investment & 0 & 0 & 0 & 0 & 0 & 0 & 0 & 0 & 0 & +1 \\
				Nonresidential investment & 0 & +1 & 0 & 0 & 0 & +1 & 0 & 0 & +1 & +1 \\
				Exports & 0 & 0 & 0 & 0 & 0 & 0 & 0 & 0 & 0 & 0 \\
				Imports & 0 & 0 & 0 & 0 & 0 & 0 & 0 & 0 & 0 & 0 \\
				Government spending & 0 & 0 & 0 & 0 & +1 & 0 & 0 & 0 & 0 & 0 \\
				Fed. budget surplus/deficit & 0 & 0 & 0 & 0 & -1 & 0 & 0 & 0 & 0 & 0 \\
				Fed. tax receipts & 0 & 0 & 0 & 0 & +1 & 0 & 0 & 0 & 0 & 0 \\
				GDP deflator & +1 & +1 & +1 & -1 & +1 & -1 & -1 & -1 & -1 & +1 \\
				PCE index & +1 & +1 & +1 & -1 & +1 & -1 & -1 & -1 & -1 & +1 \\
				PCE index less F\symbol{38}E & +1 & +1 & +1 & -1 & +1 & -1 & -1 & -1 & -1 & +1 \\
				CPI index & +1 & +1 & +1 & -1 & +1 & -1 & -1 & -1 & -1 & +1 \\
				CPI index less F\symbol{38}E & +1 & +1 & +1 & -1 & +1 & -1 & -1 & -1 & -1 & +1 \\
				Hourly wage & 0 & 0 & 0 & 0 & 0 & +1 & -1 & -1 & +1 & 0 \\
				Labor productivity & 0 & 0 & 0 & 0 & 0 & +1 & 0 & 0 & +1 & 0 \\
				Utilization-adjusted TFP & 0 & 0 & 0 & 0 & 0 & +1 & 0 & 0 & +1 & 0 \\
				Employment & 0 & 0 & 0 & -1 & 0 & 0 & 0 & 0 & 0 & 0 \\
				Unemployment rate & -1 & -1 & -1 & +1 & -1 & -1 & +1 & -1 & +1 & +1 \\
				Industrial production index & +1 & +1 & +1 & -1 & 0 & 0 & 0 & 0 & 0 & 0 \\
				Capacity utilization & +1 & +1 & +1 & -1 & 0 & 0 & 0 & 0 & 0 & 0 \\
				Housing starts & 0 & 0 & 0 & 0 & 0 & 0 & 0 & 0 & 0 & 0 \\
				Disposable income & 0 & 0 & 0 & 0 & 0 & 0 & 0 & 0 & 0 & 0 \\
				Consumer sentiment & 0 & 0 & 0 & 0 & 0 & 0 & 0 & 0 & 0 & 0 \\
				Fed funds rate & +1 & +1 & +1 & +1 & +1 & 0 & 0 & 0 & 0 & 0 \\
				3-month T-bill rate & +1 & +1 & +1 & +1 & +1 & 0 & 0 & 0 & 0 & 0 \\
				2-year T-note rate & 0 & 0 & 0 & +1 & 0 & 0 & 0 & 0 & 0 & 0 \\
				5-year T-note rate & 0 & 0 & 0 & +1 & 0 & 0 & 0 & 0 & 0 & 0 \\
				10-year T-note rate & 0 & 0 & 0 & +1 & 0 & 0 & 0 & 0 & 0 & 0 \\
				Prime rate & +1 & +1 & +1 & +1 & +1 & 0 & 0 & 0 & 0 & 0 \\
				Aaa corporate bond yield & 0 & 0 & 0 & +1 & 0 & 0 & 0 & 0 & 0 & 0 \\
				Baa corporate bond yield & 0 & 0 & 0 & +1 & 0 & 0 & 0 & 0 & 0 & 0 \\
				Trade-weighted US index & 0 & 0 & 0 & 0 & 0 & 0 & 0 & 0 & 0 & 0 \\
				S\symbol{38}P 500 & 0 & -1 & +1 & -1 & 0 & 0 & 0 & 0 & 0 & +1 \\
				Spot oil price & 0 & 0 & 0 & 0 & 0 & 0 & 0 & 0 & -1 & 0 \\
				\midrule
				\addlinespace
				\multicolumn{11}{l}{\textit{Ranking restrictions}} \\
				\midrule
				Nonresidential investment/GDP & -1 & +1 & +1 & 0 & 0 & 0 & 0 & 0 & 0 & 0 \\
				Government spending/GDP & -1 & -1 & -1 & 0 & +1 & 0 & 0 & 0 & 0 & 0 \\
				\bottomrule
			\end{tabular}%
		}
        \scriptsize{Ten shocks are identified: demand (Dem), investment (Inv), financial (Fin), monetary (Mon), government spending (Gov), technology (Tec), labor supply (Lab), wage bargaining (Wag), oil price (Oil), and consumer sentiment (Con). In the top panel, a negative and positive impact effect of each shock is indicated by $-1$ and $+1$, respectively. In the bottom panel, $-1$ ($+1$) indicates that the shock has a smaller (greater) impact effect on the former variable in the leftmost column than on the GDP. Everywhere, 0 indicates no restriction.}
\end{minipage}
	\end{table}

	Figure \ref{fig:var35_irf1} depicts the impulse response functions of the real GDP, Fed funds rate, unemployment rate, PCE index, nonresidential investment, and real wage to a one standard deviation positive demand and investment shocks. The impulse responses are visually identical to those obtained using the elliptical slice sampler of \textcite{arias2025elliptical} under the same prior distribution and identifying restrictions, which suggests that our sampling algorithm correctly targets the same posterior distribution. In particular, the left panel of Figure \ref{fig:var35_irf1} shows that the demand shock causes a temporary increase in the real GDP, prices, and the federal funds rate, but reduces unemployment. The demand shock has no effect on nonresidential investment nor real wages. The right panel depicts impulse responses to the investment shock, which leads to a rise in nonresidential investment rises for three quarters, while the other responses are similar to those observed for the demand shock.
	
	\begin{figure}[htbp]
		\centering
		\includegraphics[width=0.48\textwidth]{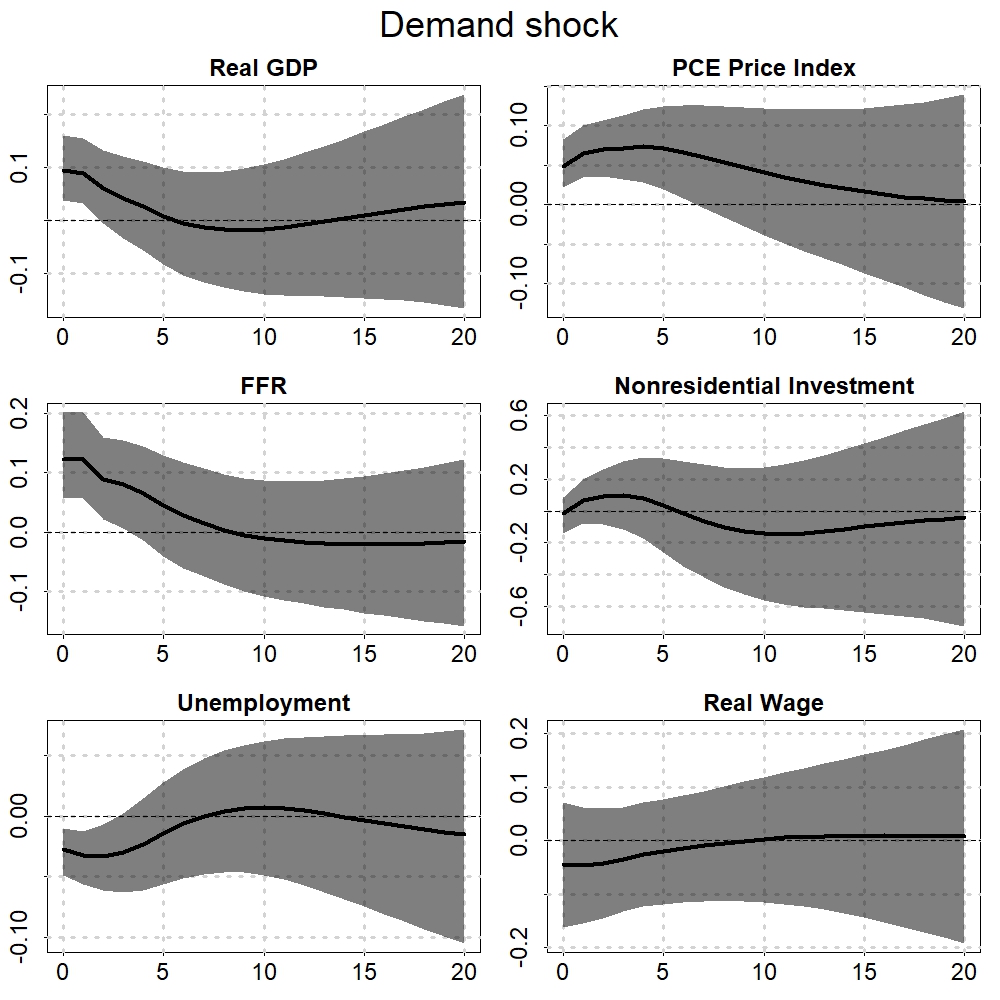}
		\hfill
		\includegraphics[width=0.48\textwidth]{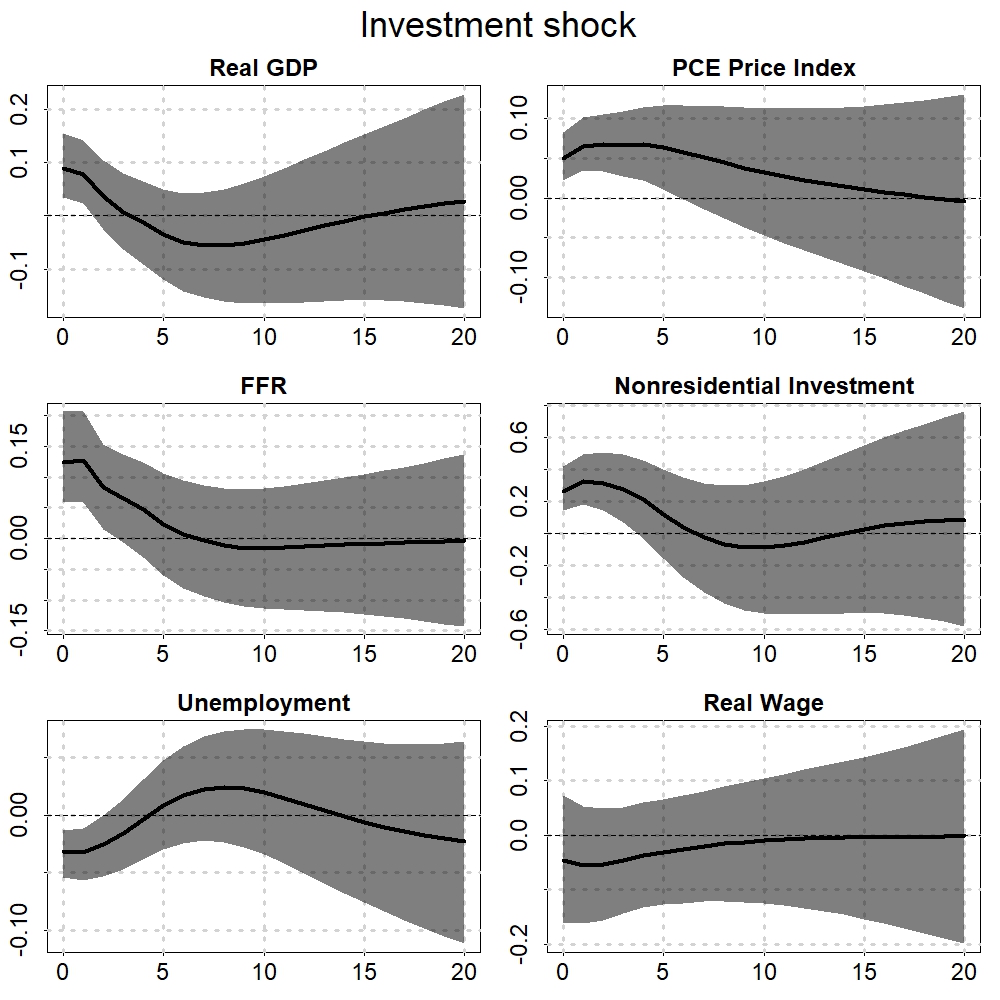}
		\caption{Impulse responses to one standard deviation demand (left panel) and investment (right panel) shocks. The solid lines indicate posterior medians, and the shaded areas are the 68\% pointwise posterior probability regions.}
		\label{fig:var35_irf1}
	\end{figure}
	
	Figure \ref{var35_lp} displays the traceplots of the unnormalized log-posterior and the  impact response of GDP to a demand shock. For clarity, only the first 4{,}000 iterations of the sampler are shown. The initial value lies within the typical set of the posterior distribution, and, therefore, no iterations are spent moving the chain into the typical set. Furthermore, the chains appear stationary and exhibit low autocorrelation, indicating a good performance of the sampler. Traceplots of other parameters are similar and are therefore not shown.	
	
	\begin{figure}[htbp]
		\centering
		\includegraphics[width=0.8\textwidth]{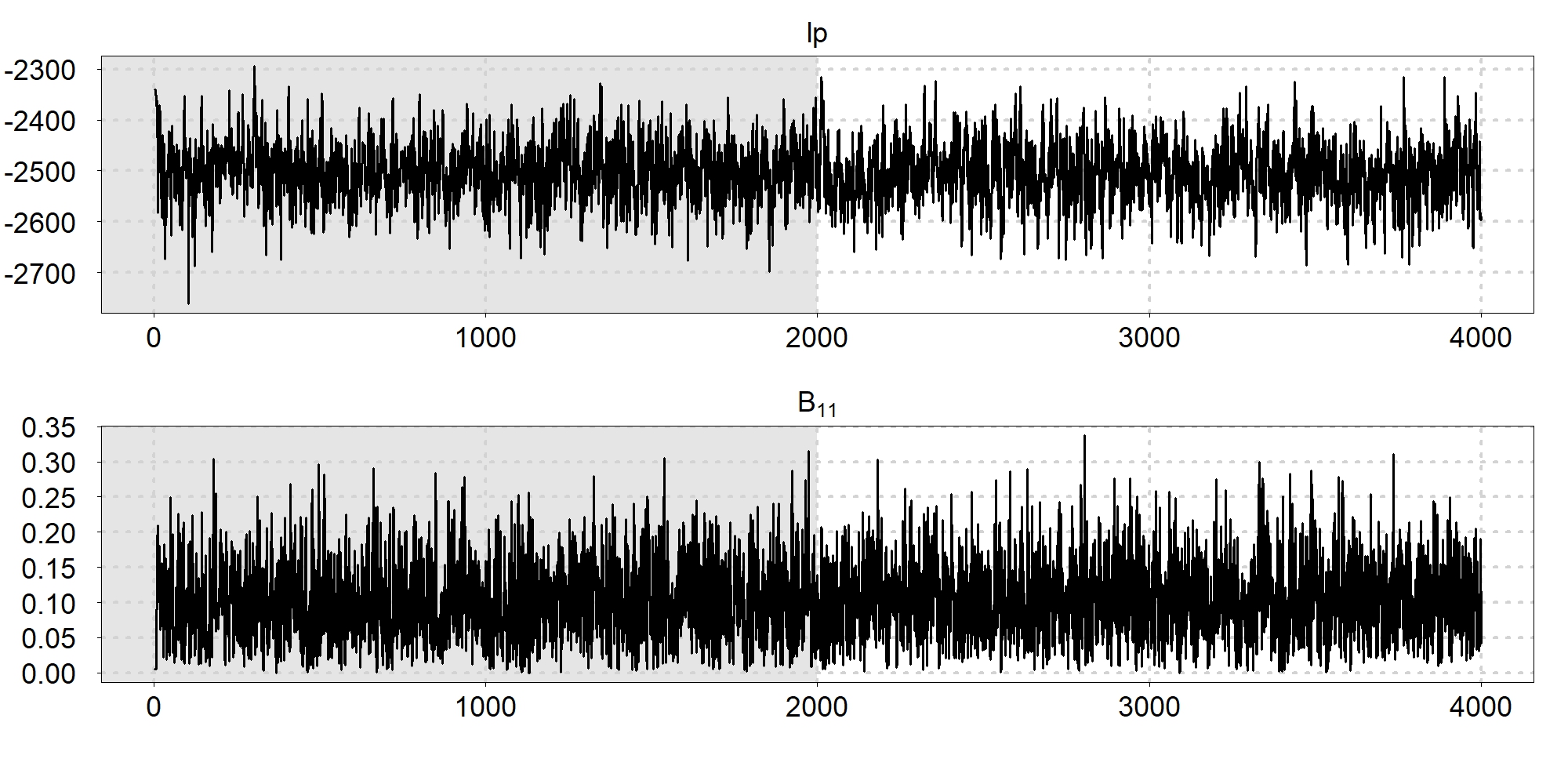}
		\caption{Traceplot of the unnormalized posterior log-density (top) and the impulse response of GDP to a demand shock (bottom). The shaded areas indicate warmup iterations. The horizontal axis denotes iterations.}
		\label{var35_lp}
	\end{figure}

    \begin{figure}[htbp]
		\centering
		\includegraphics[width=0.8\textwidth]{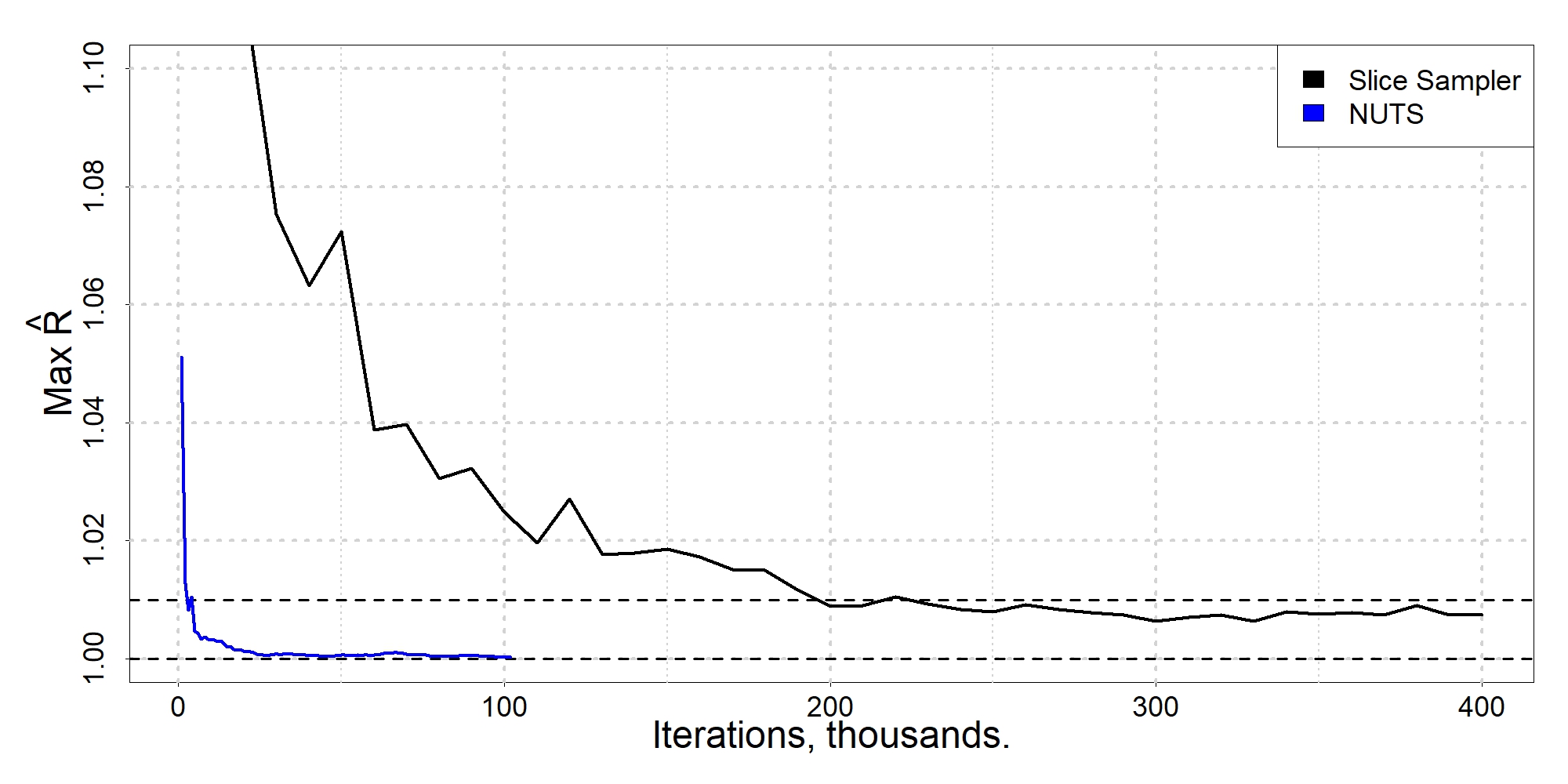}
        \caption{The maximum $\Rhat$ statistic for samples of different sizes from the elliptical slice sampler and NUTS. Horizontal dashed lines are located at 1.00 and 1.01. The values are calculated for the elements of $B$ for the NUTS, and the elements of $\Sigma$ and $Q$ for the elliptical slice sampler. The vertical axis is restricted to $(1.0, 1.1)$ for clarity.}
		\label{fig:rhats}
	\end{figure}

    \begin{figure}[htbp]
		\centering
		\includegraphics[width=0.8\textwidth]{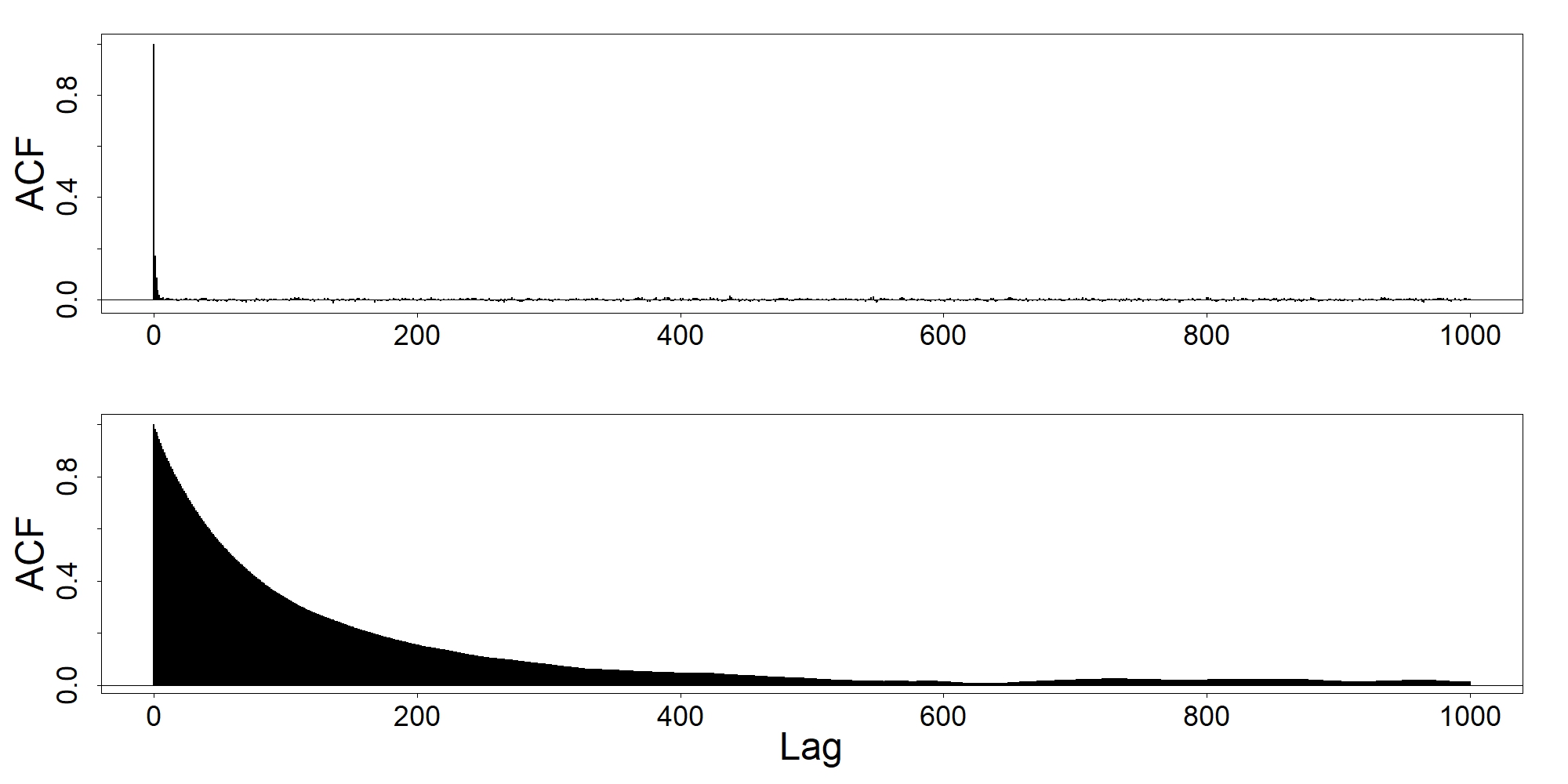}
		\caption{Autocorrelation plots of the sample of $B_{11}$ obtained using the NUTS (top panel), and $Q_{11}$ obtained using the elliptical slice sampler (bottom panel).}
		\label{fig:var35_acf}
	\end{figure}

    Figure \ref{fig:rhats} plots the maximum values of the $\Rhat$ statistic for samples obtained using both samplers. The $\Rhat$ is calculated for the elements of $\Sigma$ and $Q$ for the elliptical slice sampler, and for the elements of $B$ in the case of the NUTS. As in the previous application, we consider a threshold value of 1.01. As can be seen, $\Rhat$ converges much faster when sampling using the NUTS. The elliptical slice sampler requires roughly 200 thousand saved draws to reach the threshold of 1.01, while the NUTS only needs five thousand total iterations. Hence, the NUTS takes approximately six minutes to obtain an acceptable sample, whereas the elliptical slice sampler requires approximately 24 minutes. 
    
    The large number of iterations required by the elliptical slice sampler to converge suggests a high degree of autocorrelation also in this empirical application. Indeed, the chain generated by the elliptical slice sampler is highly autocorrelated, as Figure \ref{fig:var35_acf} shows. The autocorrelation coefficients for the $(1,1)$ element of the matrix $Q$ are positive for the first thousand lags, while the chain for the $(1,1)$ element of $B_{11}$ generated by the NUTS is uncorrelated beyond a few dozen lags.

    \begin{table}[htbp]
        \centering
        \caption{Minimum ESS measures per 1000 iterations. The values are calculated for elements of $B$ for the NUTS, and $\Sigma$ and $Q$ for the elliptical slice sampler.}
        \begin{tabular}{rcc}
            \toprule
            & \multicolumn{2}{c}{ESS type} \\
            \cmidrule(lr){2-3}
            Sampler & Bulk & Tail \\
            \midrule
            NUTS & 217.21 & 341.21 \\
            Slice Sampler & 2.24 & 5.11 \\
            \bottomrule
        \end{tabular}
        \label{tab:ess_var35}
    \end{table}

    In this large SVAR, the efficiency per iteration of our method in terms of bulk and tail ESS for individual parameters is significantly better than that of the elliptical slice sampler, as indicated by Table \ref{tab:ess_var35}. The minimum bulk ESS generated per 1000 iterations for the elements of $B$ is 217.21 when using the NUTS. In contrast, the minimum bulk ESS generated by the slice sampler for the elements of $\Sigma$ and $Q$ is 2.24. In the case of tail ESS, the corresponding values are 341.21 for the NUTS and 5.11 for the slice sampler. 

    As in Section \ref{sec:oilmarket}, we also assess convergence using the minimum multivariate ESS of \textcite{vats2019} (the multivariate ESS is calculated for the parameters in columns 1--10 corresponding to the identified shocks of interest). To achieve a precision of 0.05 for the 95\% confidence region, the NUTS requires roughly 7,000 iterations, whereas the elliptical slice sampler requires 124,000 iterations. These figures correspond to runtimes of 8 and 15 minutes, respectively. The time required to obtain 1{,}000 effective samples is 1.24 minutes for the NUTS, compared to 2.64 minutes for our implementation of the elliptical slice sampler. Our approach is faster than the algorithm of \textcite{chan2025}, although their results are not directly comparable to ours, as they use the asymmetric conjugate prior of \textcite{chan2022}, and only identify eight shocks instead of ten. They report a runtime of 140 minutes to obtain 1,000 draws, and as their draws are independent, this corresponds to an ESS of 1,000.
    
	\begin{figure}[htbp]
		\centering
		\includegraphics[width=0.48\textwidth]{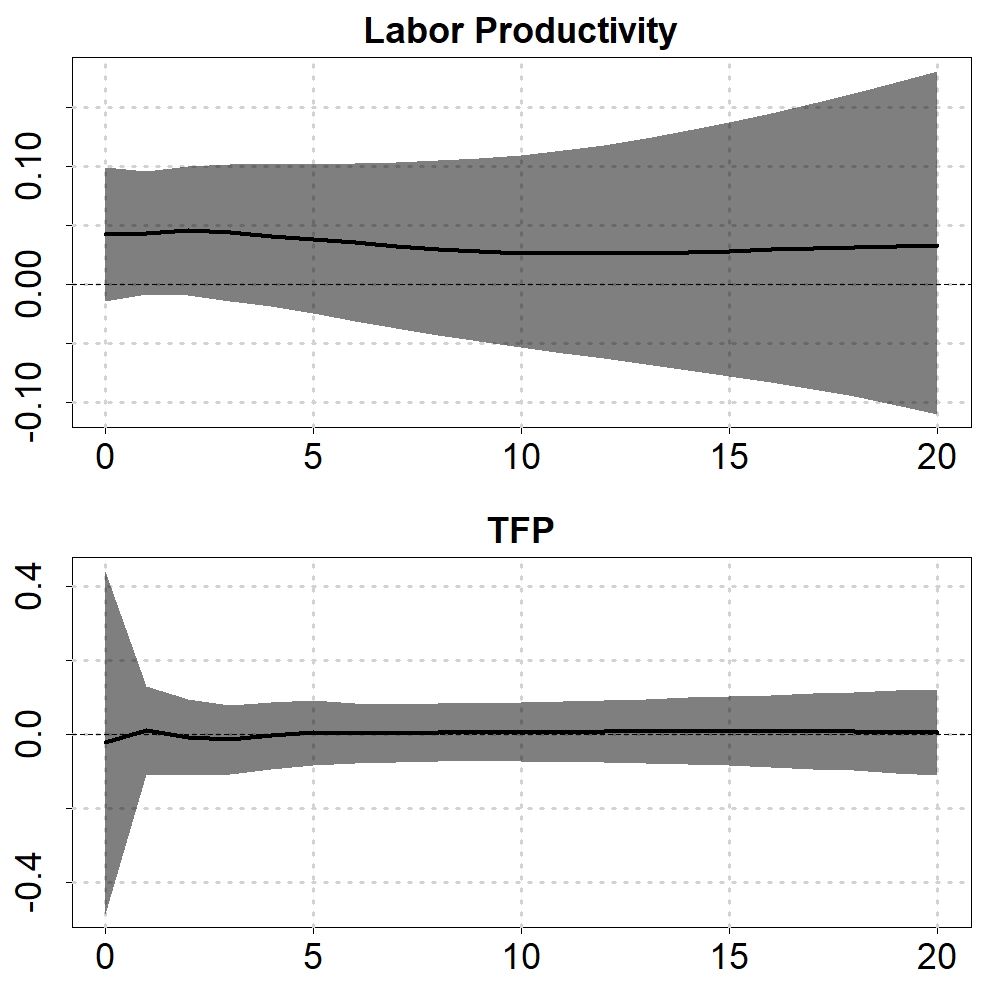}
		\hfill
		\includegraphics[width=0.48\textwidth]{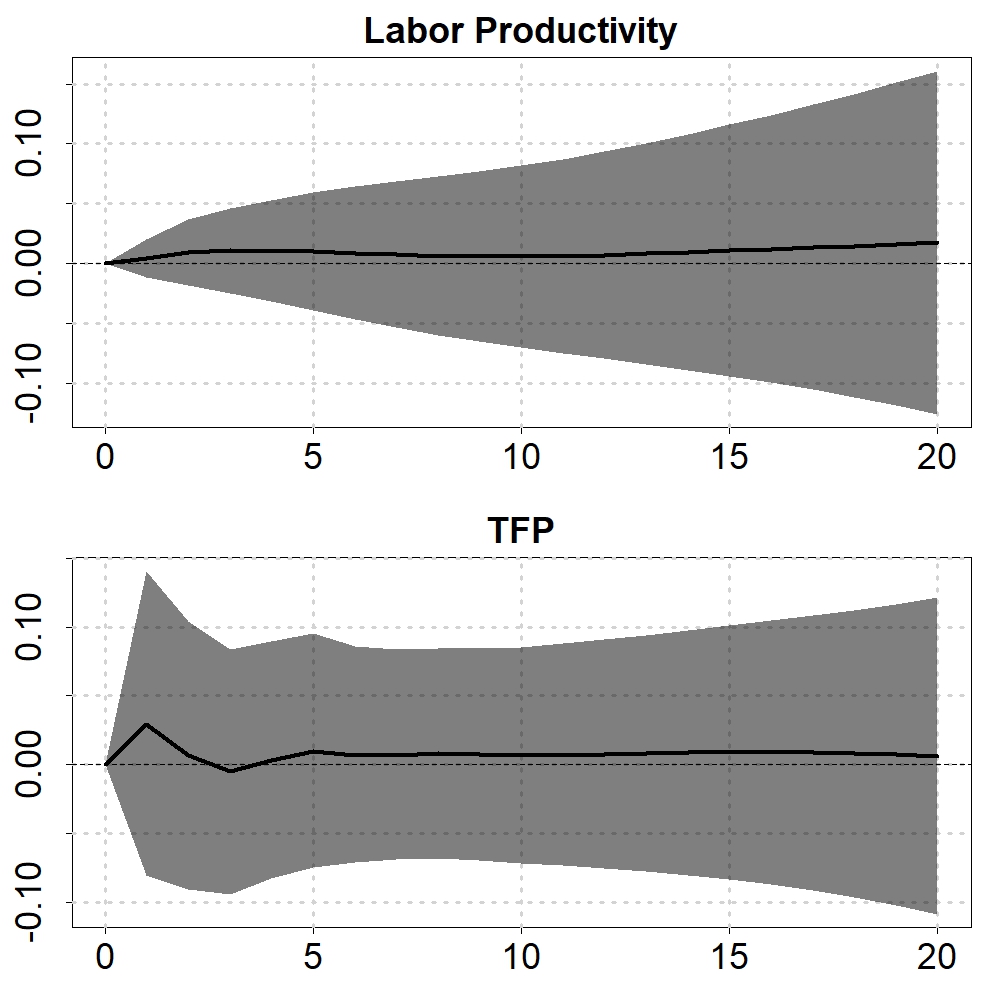}
		\caption{Impulse responses to a one standard deviation labor supply shock under the baseline identification (left panel) and with additional zero restrictions (right panel). The solid lines indicate the posterior medians, and the shaded areas are the 68\% pointwise posterior probability regions.}
		\label{fig:var35_irf_prod}
	\end{figure}
	
	To demonstrate that our approach can simultaneously incorporate both sign and zero restrictions, we augment the identification scheme with the restriction that labor productivity and utilization-adjusted TFP react on impact only to the technology and oil price shocks (assumed to have a positive impact effect). As discussed in Section \ref{sec:enforcing_restrictions}, introducing these zero restrictions does not require importance sampling to ensure that the sample is from the correct distribution. Instead, the restricted values are simply fixed in the density function and omitted from the parameter vector. Hence, estimation time is largely unchanged by the addition of these restrictions. The bulk and tail effective sample sizes per 1,000 iterations generated by the algorithm are 127.49 and 271.75, respectively, while the maximum $\Rhat$ crosses the 1.01 threshold after 5 thousand iterations, or 7 minutes. In terms of the multivariate ESS, the sampling efficiency comes at 1.72 minutes for 1{,}000 effective draws (as before, the effective draws are calculated for the impulse response parameters in columns 1--10 corresponding to the identified shocks of interest). Thus, sampling is less efficient than in the case of having no zero restrictions, but the convergence time in terms of the $\Rhat$ is essentially identical.

	In general, the impulse responses depicted in Figure \ref{fig:var35_irf1} remain intact. However, the responses to the labor supply shock shown in Figure \ref{fig:var35_irf_prod} change substantially when the zero restrictions are imposed. Under the baseline identification (left panel), the response of labor productivity has a substantial amount of posterior mass above zero, while the impact response of TFP is centered around zero but there is a lot of uncertainty about this estimate. In contrast, under the zero restrictions, the 68\% posterior credible intervals are much narrower at short horizons.

	\section{Conclusion}\label{sec:conclusion}

    Tight identifying inequality restrictions may render conventional accept-reject algorithms inefficient and in some cases infeasible for inference in SVAR models. In large-scale models that are becoming increasingly popular in empirical research, the high-dimensional parameter space and the large number of imposed restrictions can give rise to problems, even when individual restrictions are relatively weak, as their cumulative effect may still be substantial. In this paper, we propose an approach based on imposing inequality restrictions through a reparameterization of the structural model by continuously differentiable mappings. This approach accommodates various kinds of inequality restrictions, including shape and ranking restrictions on impulse responses at multiple horizons, as well as bounds on elasticities.  Moreover, in contrast to the approaches of \textcite{arias2025elliptical, chan2025, read2025}, our framework allows for a straightforward implementation of exclusion restrictions, with no significant increase in the speed of the sampler. As an additional advantage, the prior can be specified in a very flexible manner, as our approach does not rely on direct sampling procedures, albeit this usually entails an increase in computation time.

    While our framework is very flexible, it necessitates relying on MCMC methods because the posterior distribution of the parameters does not belong to a standard family, and we show that the NUTS algorithm, a variant of the HMC sampler, performs extremely effectively in exploring the posterior. However, each iteration of the NUTS requires repeated evaluation of the log-posterior and its gradient, which makes sampling computationally intensive. To address this, we show how the normal-inverse Wishart posterior density can be evaluated efficiently under the proposed reparameterization, facilitating inference even in high-dimensional parameter spaces.
    
    Recently, \textcite{kitagawa2025} have raised doubts about the effectiveness of MCMC algorithms in Bayesian analysis under set identification, but our simulation experiments indicate that the implementation in the Stan software package \parencite{stanreference} is effective in exploring high-dimensional, set-identified posteriors with flat directions and produces well-mixed chains even when the identified set is large. 

    Empirical illustrations with small-scale and large-scale SVAR models highlight the good performance of our approach. In particular, it exhibits clearly faster mixing than the Gibbs sampler of \textcite{arias2025elliptical}, while it turns out that the latter can generate Markov chains with extremely persistent serial correlation even at hundreds of lags, resulting in slow mixing and a very small effective sample size.

	\clearpage

	\printbibliography

	\clearpage

\begin{appendices}
\numberwithin{equation}{section} 

    \section{Proof of Proposition \ref{prop:structural_density}}\label{app:structural_density}
    In this appendix we prove Proposition \ref{prop:structural_density}, which gives us the unconstrained posterior of the structural parameters in terms of the normal-inverse Wishart posterior over the orthogonal reduced-form parameters. The proof builds a mapping from the structural parameters to the reduced-form parameters and derives the corresponding Jacobians for transforming the density function.
    
    We apply the change of variables in two steps, and for each step, we need to derive the absolute Jacobian determinant of the inverse transformation involved in order to characterize the induced posterior density over the structural parameters. First, we derive the Jacobian determinant $|J_1|$ of the transformation $g_1$ of the IRF matrices $\Psi_i$ into the corresponding reduced-form VAR matrices $\tilde{A}_i$. Then, we derive the the Jacobian determinant $|J_2|$, corresponding to the transformation $g_2$ of the structural impact matrix $B = \Psi_0$ into the reduced-form parameters $\Sigma$ and $Q$. The overall Jacobian determinant is then $|J_1||J_2|$.
	
	For the first inverse transform $g_1$, the IRF parameters satisfy the recursion
	\begin{align}
		\Psi_i &= \sum_{j=1}^{i} \tilde{A}_j\Psi_{i-j}, \ i\in\{1,\dots,k\},\nonumber
	\end{align}
	from which we obtain the inverse mapping
	\begin{align}
		\tilde{A}_i &= \left( \Psi_i - \sum_{j=1}^{i-1} \tilde{A}_j\Psi_{i-j} \right)\Psi_0^{-1},\ i\in\{1,\dots,k\}.\label{eq:var_recursion}
	\end{align}
	This inverse mapping is needed to calculate the Jacobian necessary for transforming the density function,
	\begin{align}
		|J_1| = \left| \frac{\partial \vec{(\tilde{c}, B,\tilde{A}_1,\dots, \tilde{A}_p)} }{\partial \vec(\tilde{c}, \Psi_0,\dots,\Psi_k, \tilde{A}_{k+1}, \dots, A_p)} \right|. \nonumber 
	\end{align}
	
	The construction of the Jacobian determinant turns out to be rather straightforward. Due to the relationship $B=\Psi_0$ and the fact that the formula in \eqref{eq:var_recursion} is recursive, it follows that the Jacobian is a block-triangular matrix. Thus its determinant is the product of the determinants of the diagonal blocks. The first two diagonal blocks of dimensions $N\times N$ and $N^2\times N^2$ correspond to the identity maps on $\tilde{c}$ and $B$, respectively, and are thus identity matrices. The same is true for the $N^2 \times N^2$ blocks in positions $k+1,\dots,p$, as they correspond to identity maps on matrices $\tilde{A}_i$. The remaining blocks on the diagonal correspond to the quantities
	\begin{align}
		\frac{\partial\vec(\tilde{A}_i)}{\partial\vec(\Psi_i)}, \nonumber
	\end{align}
	and as the only term containing $\Psi_i$ in the summation in \eqref{eq:var_recursion} is $\Psi_i\Psi_0^{-1}$, we have
	\begin{align}
		\left|\frac{\partial\vec(\tilde{A}_i)}{\partial\vec(\Psi_i)}\right| &= \left|\frac{\partial\vec(\Psi_i\Psi_0^{-1})}{\partial\vec(\Psi_i)}\right| = \left| \frac{\partial((\Psi_0')^{-1}\otimes I_N)\vec(\Psi_i)}{\partial\vec(\Psi_i)} \right| = |(\Psi_0')^{-1}\otimes I_N| = |\Psi_0|^{-N}. \nonumber
	\end{align}
	As there are $k$ such blocks on the diagonal, it follows that the Jacobian determinant $|J_1|$ is $|\Psi_0|^{-kN}=|B|^{-kN}$. Notice that if only impact responses are restricted, then $k=0$ and this determinant equals $1$.
	
	For the second inverse transformation $g_2$ we need the Jacobian $J_2$ of the mapping between the structural impact matrix $B$ and the matrices $\Sigma$ and $Q$. The relationship $\Sigma = B B'$ is not useful because the Jacobian of this transformation is not full rank. This follows from the fact that this transformation maps a full $N\times N$ matrix into the space of symmetric, positive definite matrices, which is of dimension $N(N+1)/2 < N^2$. Instead, we take the $LQ$-decomposition of $B=LQ$, where $L$ and $Q$ are a lower triangular and an orthogonal matrix, respectively, and set $\Sigma = LQQ'L' = LL'$. To ensure that the decomposition is unique, we assume that the diagonal of $L$ is positive. This is the well-known Cholesky parametrization of a symmetric positive-definite matrix. We thus avoid ending up with a singular Jacobian, and the Jacobian determinant for the Cholesky decomposition is $2^N\prod_{i=1}^{N}L_{ii}^{N-i+1}$ \parencite[pp.134]{stanreference}. The final ingredient for this transformation is the Jacobian of the $LQ$ decomposition itself, which is $\prod_{i=1}^{N}L_{ii}^{-(N-i)}$ \parencite[Theorem 3.1]{edelman1989}. Multiplication of this by the Jacobian of the Cholesky decomposition yields the full Jacobian determinant $|J_2|=2^N \prod_{i=1}^{N} L_{ii}$. We also recognize that $\prod_{i=1}^{N} L_{ii} = |B|$.
	
	Combining all these derivations together, we see that if the distribution over the parameters $(\Sigma, Q, \tilde{A})$ is defined by the density $f$, then the density over the structural parameters $\Pi_k$ is
	\begin{align*}
		p(\Pi_k) \propto f(\Sigma, Q, \tilde{A})|B|^{-kN+1}, 
	\end{align*}
	where we do not make explicit the dependence of the right-hand side parameters on the left-hand side parameters. The proposition follows by taking $f$ to be the normal-inverse Wishart posterior of the orthogonal-reduced form parameters.

	\section{Proof of Proposition \ref{prop:conditional_density}}\label{app:conditional_density}
	In this appendix, we provide a proof that the density of a random vector conditional on equality and set restrictions is proportional to the unrestricted density truncated to the constrained region. This is crucial for the efficiency of our sampler, as calculating normalizing terms could be extremely expensive.
	
	Suppose the random vector $\theta$ has a continuous density over $\R^{n_1 + n_2 + n_3}$, and partition it as $\theta = (\theta_1, \theta_2, \theta_3)$, with the dimension of $\theta_i$ being $n_i$. We use $f$ to denote a density function. Suppose $c\in\R^{n_2}$ is a vector of constants and $\int_{R} \int_{\R^{n_3}} f(\theta_1, c, \theta_3)\dif{\theta_3}\dif{\theta_1} > 0$, where $R$ is a set of strictly positive probability. We proceed in two steps. First, we condition $\theta$ on the event $\theta_1 \in R\subset\R^{n_1}$. In the second step, we condition the resulting density on $\theta_2  = c$.
	
	Conditioning on $\theta_1 \in R$ yields
	\begin{align*}
		f(\theta_1, \theta_2, \theta_3 \mid \theta_1 \in R) &= \frac{\mathds{1}_{\{\theta_1 \in R\}} f(\theta_1, \theta_2, \theta_3)}{P(\theta_1 \in R)} = \frac{\mathds{1}_{\{\theta_1 \in R\}} f(\theta_1, \theta_2, \theta_3)}{ \int_{R} \int_{\R^{n_2}} \int_{\R^{n_3}} f(\theta_1, \theta_2, \theta_3)\dif{\theta_3}\dif{\theta_2}\dif{\theta_1} }  \\
		&\propto \mathds{1}_{\{\theta_1 \in R\}} f(\theta_1, \theta_2, \theta_3) 
	\end{align*}

	By definition of conditional density, we can further condition this on $ \theta_2 = c $. to obtain
	\begin{align*}
		f(\theta_1, \theta_2, \theta_3 \mid \theta_1 \in R, \theta_2 = c) &= \frac{\mathds{1}_{\{\theta_1 \in R\}} f(\theta_1, c, \theta_3)}{\int_{R} \int_{\R^{n_3}} f(\theta_1, c, \theta_3)\dif{\theta_3}\dif{\theta_1}}  \\
		&\propto \mathds{1}_{\{\theta_1 \in R\}} f(\theta_1, c, \theta_3)
	\end{align*}

	\section{Estimation Algorithm}\label{app:algorithm}
	
	The transformations we have described in the text result in a density with no known direct sampling method. Instead, we use a Metropolis-style sampler, specifically a variant of Hamiltonian Monte Carlo (HMC) called the No-U-Turn Sampler (NUTS). The HMC algorithm was originally introduced by \textcite{duane1987} for simulating lattice field theory, while \textcite{neal1996} introduced this method to more general problems of statistical inference. In this appendix, we give a brief overview of the algorithm, following the notation of \textcite{stanreference}.
	
	In HMC, we augment the target density $p(\theta)$ with auxiliary momentum variables $\rho$ and draw from the join density $p(\theta, \rho) = p(\rho|\theta)p(\theta)$. In most applications the momentum variables are independent from the parameters, so that $ p(\theta, \rho) = p(\theta)p(\rho)$, and, furthermore, it is assumed that $\rho\sim N(0, M)$, where the covariance matrix $M$ is known as the ``metric'' of the algorithm.
	
	Given the momentum variables, we define the Hamiltonian as 
	\begin{align}
		H(\theta, \rho) &= -\log p(\theta) - \log p(\rho)  \label{hamiltonian} \\ 
		&= V(\theta) + T(\rho). \nonumber
	\end{align}
	The term $V(\theta)$ is the ``potential energy,'' while $T(\rho)$ is the ``kinetic energy.'' Taking the derivatives with respect to fictitious time gives Hamilton's equations
	\begin{align*}
		\frac{d\theta}{d t} &= \frac{\partial T}{\partial \rho} \\
		\frac{d\rho}{d t} &= \frac{\partial V}{\partial \theta}
	\end{align*}
	
	These two differential equations define how the Markov chain moves from one state to another. In practice, they do not have a closed-form solution except for the simplest problems, so the dynamics are approximated by a leapfrog integrator. At the beginning of each iteration, a random momentum vector $\rho$ is sampled, after which the equations are simulated forward in time by taking $L$ steps of length $\varepsilon$. Within each step, the momentum $\rho$ and position $\theta$ are updated in order as
	\begin{align*}
		\rho^{t+\varepsilon/2} &= \rho^t + (\varepsilon/2)\nabla_\theta\log p(\theta^t) \\
		\theta^{t+\varepsilon} &= \theta^t + \varepsilon\rho^{t+\varepsilon/2} \\
		\rho^{t+\varepsilon} &= \rho^{t+\varepsilon/2} + (\varepsilon/2)\nabla_\theta\log p(\theta^{t+\varepsilon}).
	\end{align*}
	That is, within each iteration, two half-steps are taken to update the momentum, while the position is updated once. This procedure generates a candidate draw $\theta'$, which is either accepted or rejected according to a Metropolis acceptance step.
	
	The acceptance probability in the Metropolis step is $\min\left(1, \frac{p(\theta', \rho')}{p(\theta, \rho)}\right)$. By comparing this to the definition of the Hamiltonian in \eqref{hamiltonian}, we can see that the acceptance probability is in fact a function of the change in the Hamiltonian between the proposed point and the starting point, as
	\begin{align*}
		\frac{p(\theta', \rho')}{p(\theta, \rho)} = \exp(H(\theta,\rho) - H(\theta', \rho')).
	\end{align*}
	In an idealised world this change is zero, and thus every proposed draw would be accepted. However, as we are simulating the Hamiltonian dynamics numerically, some error will be introduced. In practice, these errors tend not to be large, and the algorithm achieves high acceptance rates even for very long trajectories.
    
	Altogether, Hamiltonian Monte Carlo can be seen as a variant of the popular Metropolis-Hastings (MH) algorithm with a complicated proposal distribution. The long paths which result from exploiting gradient information allow HMC to mitigate the random walk behavior present in the MH algorithm.
	
	To implement an HMC sampler in practice, in this paper, we use the Stan software package \parencite{stanreference}, which implements an HMC variant known as the No-U-Turn sampler (NUTS), developed by \textcite{hoffman2014}. This algorithm tunes the number of steps of the leapfrog integrator dynamically by doubling the length of the path of the leapfrog integrator until the path starts to make a U-turn, at which point the expansion is stopped. Furthermore Stan implements procedures for adapting both $\varepsilon$ and $M$ during the warmup phase of the chain, thus eliminating the need to hand-tune the values. $M$ can be specified to be either diagonal or a dense matrix, and the theoretically optimal value for $M$ equals the inverse of the covariance matrix of the target distribution \parencite[Section 4.3]{betancourt2018}. For numerical reasons, practical implementations of HMC tend to work with the inverse metric $M^{-1}$ directly.
	
	Implementation of HMC requires the calculation of the gradient of the target log-density. In practically interesting models deriving the analytical expressions might be prohibitively time-consuming, and also prone to errors. Stan solves this problem by using automatic differentiation routines, which allow for exact gradients to be evaluated efficiently, without relying on symbolic methods. In practice, one only needs to be able to write out the posterior density to use Stan.

	\section{Non-centered parametrization} \label{app:non-centered}
	The N-IW distribution,
	\begin{align*}
		\Sigma\mid Y &\sim IW(\nu, S) \\
		\vec(A) \mid \Sigma, Y &\sim N(\vec(\Phi), \Sigma\otimes\Omega),
	\end{align*}
	is essentially a case of a hierarchical model, as the local variances of the elements of $A$ depend on the global parameters in $\Sigma$. This class of models can be difficult for MCMC algorithms to explore, as discussed by \textcite{betancourt2013}. The reason is that changes in the values of $\Sigma$ induce large changes in the curvature of the posterior. For low values, the probability mass concentrates around the mean of $A$, while for large values the density spreads over a larger volume. This is also known as a funnel geometry.
	
	The varying curvature means that a Markov chain is not able to explore the posterior efficiently, as exploration of the narrow part of the posterior requires a very small step size, which is inefficient in the wider part of the distribution. Additionally, the covariances between the elements of $A$ are another source of inefficiency.  In the special case of only impact restrictions, we can follow \textcite{betancourt2013} and use the non-centered parameterization. This means that we declare an auxiliary variable $Z$ that has the same dimension as $A$ and is independent of $\Sigma$. We then sample from the distribution
	\begin{align*}
		\Sigma\mid Y &\sim IW(\nu, S) \\
		\vec(Z) \mid \Sigma, Y &\sim N(0, I).
	\end{align*}
	By the properties of the normal distribution, we obtain the desired distribution on $A$ by setting, for each draw,
	\begin{align*}
		\vec(A) &= \vec(\Phi) + (L_\Sigma \otimes L_{\Omega})\vec(Z) \\
		&= \vec(\Phi) + \vec(L_{\Omega} Z L_\Sigma').
	\end{align*}

	This parametrization alleviates the challenging geometry of the original posterior, as the part corresponding to the VAR matrices is standard normal, and thus the challenging funnel geometry is avoided completely. For an HMC algorithm, it is thus possible to run the leapfrog integrator with a larger stepsize and thus for fewer steps.  Furthermore, the evaluation of the density function of the standard normal distribution is much faster than the evaluation of a general multivariate normal distribution, which saves a lot on computation.

    Note that $B$ and $Z$ are independent because $\Sigma$ and the auxiliary parameter $Z$ are independent and the structural impact matrix $B$ is a function of $\Sigma$ but not $Z$. For HMC this means that the inverse metric $M^{-1}$ (see Appendix \ref{app:algorithm}) is block-diagonal, as there are no covariances between $B$ and $Z$ to be taken into account. In our experience, this makes a diagonal inverse metric a viable choice when a non-centered parameterization is used. Especially in a large model, estimating the inverse metric $M^{-1}$ is usually prohibitively time consuming, and sampling efficiency can be reduced if its estimate is far from the optimal value of the covariance matrix of the target posterior distribution \parencite[Section 4.3]{betancourt2018}. Therefore, in the application of Section \ref{sec:35variable} involving the 35-variable model, we use a diagonal inverse metric. 
    In the oil market application in Section \ref{sec:oilmarket} we use a dense inverse metric because the distribution of $(\Psi_1\dots,\Psi_{12},\tA_{13},\dots, \tA_{24})$ is not a Gaussian, which precludes a non-centered parameterization.

\end{appendices}
	
\end{document}